\let\oldparagraph\paragraph
\renewcommand{\paragraph}{
	\@ifstar
	\xxxParagraphStar
	\xxxParagraphNoStar
}
\newcommand{\xxxParagraphStar}[1]{\oldparagraph*{#1}\mbox{}}
\newcommand{\xxxParagraphNoStar}[1]{\oldparagraph{#1}\mbox{}}
\let\oldsubparagraph\subparagraph
\renewcommand{\subparagraph}{
	\@ifstar
	\xxxSubParagraphStar
	\xxxSubParagraphNoStar
}
\newcommand{\xxxSubParagraphStar}[1]{\oldsubparagraph*{#1}\mbox{}}
\newcommand{\xxxSubParagraphNoStar}[1]{\oldsubparagraph{#1}\mbox{}}
\patchcmd\longtable{\par}{\if@noskipsec\mbox{}\fi\par}{}{}
\def\maxwidth{\ifdim\Gin@nat@width>\linewidth\linewidth\else\Gin@nat@width\fi}
\def\maxheight{\ifdim\Gin@nat@height>\textheight\textheight\else\Gin@nat@height\fi}
\def\fps@figure{htbp}
	\renewcommand*\contentsname{Table of contents}
	\newcommand\contentsname{Table of contents}
	\renewcommand*\listfigurename{List of Figures}
	\newcommand\listfigurename{List of Figures}
	\renewcommand*\listtablename{List of Tables}
	\newcommand\listtablename{List of Tables}
	\renewcommand*\figurename{Figure}
	\newcommand\figurename{Figure}
	\renewcommand*\tablename{Table}
	\newcommand\tablename{Table}
\newtheorem{theorem}{Theorem}[section]
\newtheorem{lemma}{Lemma}[section]
\newtheorem{definition}{Definition}[section]
\newtheorem{assumption}{Assumption}[section]
\newenvironment{manualtheorem}[1]{%
	\IfBlankTF{#1}
	{}
	{}%
	\manualtheoreminner
}{\endmanualtheoreminner}
\tikzset{
	treenode/.style={
		rectangle, draw=black, rounded corners, minimum width=16mm, minimum height=8mm,
		align=center, font=\footnotesize
	},
	term/.style={treenode, fill=gray!10},
	internal/.style={treenode, fill=blue!5},
	edge/.style={-Latex, line width=1pt},
	oplabel/.style={font=\small\bfseries}
}
\newcommand{\anon}{1}
\begin{document}

	\def\spacingset#1{\renewcommand{\baselinestretch}%
		{#1}\small\normalsize} \spacingset{1}


	\if1\anon
	{
		\title{\bf Conditional Copula models using loss-based Bayesian Additive Regression Trees}
		\author{Tathagata Basu\thanks{
				The authors gratefully acknowledge Leverhulme Trust for funding this research (Grant RPG-2022-026).}\hspace{.2cm}\\
			Newcastle University, UK\\
			Fabrizio Leisen \\
			Kings College London, UK\\
			Cristiano Villa \\
			Duke Kunshan University, China\\
			and \\
			Kevin Wilson \\
			Newcastle University, UK}
		\maketitle
	} \fi
	
	\if0\anon
	{
		\bigskip
		\bigskip
		\bigskip
		\begin{center}
			{\LARGE\bf Conditional Copula models using loss-based Bayesian Additive Regression Trees}
		\end{center}
		\medskip
	} \fi
	
	\bigskip
	\begin{abstract}
		The study of dependence between random variables under external influences is a challenging problem in multivariate analysis. We address this by proposing a novel semi-parametric approach for conditional copula models using Bayesian additive regression trees (BART) models. BART is becoming a popular approach in statistical modelling due to its simple ensemble type formulation complemented by its ability to provide inferential insights. Although BART allows us to model complex functional relationships, it tends to suffer from overfitting. In this article, we exploit a loss-based prior for the tree topology that is designed to reduce the tree complexity. In addition, we propose a novel adaptive Reversible Jump Markov Chain Monte Carlo algorithm that is ergodic in nature and requires very few assumptions allowing us to model complex and non-smooth likelihood functions with ease. Moreover, we show that our method can efficiently recover the true tree structure and approximate a complex conditional copula parameter, and that our adaptive routine can explore the true likelihood region under a sub-optimal proposal variance. Lastly, we provide case studies concerning the effect of gross domestic product on the dependence between the life expectancies and literacy rates of the male and female populations of different countries.
		
	\end{abstract}
	
	\noindent%
	{\it Keywords:} Conditional Copula; BART; Objective Bayes; Semi-parametric estimation
	\vfill
	
	\newpage
	\spacingset{1.75} 

	\section{Introduction}
	In multivariate analysis, we often observe that random variables exhibit a dependence in their behaviour, and the study of such dependence is extremely important due to its implications in modelling. However, analysing such dependence structures can be very cumbersome as we often require complex multivariate distributions, which makes this problem extremely challenging. Sklar's theorem\citep{sklar:1959} simplified such tasks, stating that a multivariate distribution can be modelled using it's marginals and a copula, that is, a joint cumulative distribution function on the unit hypercube. Several other theoretical advancements \citep{kimeldori1975uniform,ruschendorf1976,schweizer1981,Genest01091993} in the field have made copula inference popular in the context of statistical modelling, leading to its prominence in applied statistical literature involving  survival analysis \citep 	{clayton1978model,oakes1989bivariate,zheng1995estimates}; risk management \citep{engle1990asset,fama1993common}; engineering applications \citep{salvadori2007use,aghakouchak2010copula}; genetics \citep{li2006quantitative}, amongst others. While copula modelling has remained popular in multivariate analyses, a major computational improvement can be credited to \citet{bedford_vine_2002}, where they introduced a nested graphical model for copula construction and coined it as a `Vine Copula'. Later, based on their work \citet{czado_pair_cop_2010} introduced pair copula construction for multivariate modelling which improved the computational aspects further.  We refer readers to \citet{GENEST2024105278}'s review on copula modelling in remembrance of Abe Sklar for a detailed literature review.
	
	In many situations, the dependence between random variables can have an external influence, and in such cases, it is useful to adjust for external covariates in copulas. To tackle this issue, \citet{patton2006} formalised the conditional version of Sklar's theorem for application in financial time series modelling. Initial works on conditional copulas were mostly concerned with estimating time varying dependence structures using likelihood based approaches for autoregressive models \citep{patton2006,JONDEAU2006827,BARTRAM20071461}. Later \citet{acar2010} proposed a non-parametric approach for estimating conditional copulas for general problems. Several other likelihood-based non-parametric \citep{GIJBELS20111919} and semi-parametric \citep{ABEGAZ201243} approaches have been proposed in this regard. \citet{valle_cond_cop} proposed Dirichlet mixture models for estimating conditional copulas and \citet{GRAZIAN2022107417} proposed an approximate Bayesian approach. Recently, \citet{BonacinaLopezThomas+2025} proposed a ``Classification and Regression Tree'' (CART) based algorithm \citep{brei_CART} to model conditional copulas and investigated the consistency of the CART algorithm in conditional copula estimation.
	
	The introduction of the CART algorithm for conditional copula estimation motivates us to explore the viability of employing Bayesian additive Regression Trees (BART), the Bayesian alternative to CART introduced by \citet{chipman2010BART}. BART provides a generalisation of earlier Bayesian CART models \citep{chipman98BCART,denison98BCART}, where a single regression tree was used. Due to the flexible nature of BART, it has been explored in different contexts such as Poisson regression \citep{Murray03042021}; survival analysis \citep{Sparapani_BART}; gamma regression \citep{Linero_BART_gamma} and generalised BART \citep{Linero02012025}. Although the original tree prior proposed by \citet{chipman98BCART} is most commonly used in the BART literature, it is not very straightforward to incorporate prior information on the number of terminal nodes into this prior and the prior proposed by \citet{denison98BCART} tends to produce skewed trees. Several alternatives have been proposed \citep{Wu_CART,rockova_BART,Linero_BART_VS} to tackle this issue, but the choice of hyperparameters remains subjective. In order to reduce this subjectivity, \citet{serafini2024lossbasedpriortreetopologies} proposed a novel prior for regression trees using a loss based approach developed by \citet{villa_loss-prior}. The prior is formulated based on minimising the loss incurred due to the misspecification of the tree, which considers both the loss in information and complexity of the tree, making it appealing for BART models.
	
	In this article, we exploit the loss-based BART prior developed by \citet{serafini2024lossbasedpriortreetopologies} to introduce a new framework for modelling conditional copulas. However, unlike existing applications of BART models, we do not have a straightforward likelihood function to choose a conjugate prior. Consequently, we adopt a trans-dimensional Markov chain Monte Carlo (MCMC) algorithm that propose trees of varying dimensions, coupled with new terminal node values. Although a related method was recently proposed by \citet{Linero02012025}, their method relies on Laplace approximation to sample terminal node values. Unfortunately, such approximations are not viable for conditional copula modelling as the first and second order derivatives of the likelihood may not be numerically stable. To overcome this issue, we develop an efficient reversible jump MCMC (RJ-MCMC) type algorithm \citep{green_RJMCMC} tailored to the BART models. During the implementation, we notice that choosing the parameters of the proposal distribution is a rather difficult task for the sum of trees models and mixing can be extremely slow in some cases. So, we introduce an adaptive RJ-MCMC routine, that updates the variance of a Gaussian proposal by learning from previously sampled states. We further prove the ergodicity of this adaptive scheme, providing theoretical support for its applicability. Our initial analyses show that our both versions of our proposed RJ-MCMC algorithm perform well and are able to converge to the true likelihood region. Moreover, we notice that even for a subpar choice of an initial proposal variance, our adaptive RJ-MCMC routine can converge quickly to the high posterior region. Through a comprehensive simulation study with simulated conditional copulas, we illustrate our method's efficiency in identifying the true tree structure, as well as in estimating the true conditional dependence. Furthermore, due to the simple and explicable implementation of our adaptive RJ-MCMC algorithm, it can be employed to a wide range of other modelling problems showing its potential as a computation tool for BART models. 
	
	The rest of the paper is organised as follows: in \cref{sec:prelim} we introduce preliminary concepts of conditional copulas and Bayesian additive regression trees, followed by our semi-parametric estimation approach for conditional copulas in \cref{sec:cond:cop}. In \cref{sec:rjmcmc}, we discuss our proposed reversible jump MCMC routine for sampling from the posterior, along with its adaptive variant. After that, we illustrate our method using two synthetic datasets in \cref{sec:sim} to show its efficiency and accuracy; followed by case studies using the CIA world fact dataset in \cref{sec:cia}. Finally, we discuss the results shown in the paper and give conclusions in \cref{sec:conc}.
	
	\section{Preliminaries}\label{sec:prelim}
	
	In this section, we present a formal description of conditional copulas followed by the BART models and the loss-based prior for BART proposed by \citet{serafini2024lossbasedpriortreetopologies}, which we will incorporate in our modelling in later sections.
	
	\subsection{Conditional copula}
	In statistics, $C:[0,1]^d\to [0,1]$ is a $d$-dimensional copula if $C$ is a joint cumulative distribution function with uniform marginals. That is, for two uniform random variables $U_1$ and $U_2$ the copula is given by: $C(u_1,u_2) = P(U_1\le u_1, U_2\le u_2)$. For parametric families of copula, we denote it by $C(u_1,u_2\mid \theta$), where $\theta$ represents the parameter. Let $Y_1$ and $Y_2$ denote two random variables such that they have continuous marginal cumulative distribution functions (CDFs) $F_1(y_1)$ and $F_2(y_2)$. Then according to \citet{sklar:1959}, we can use a copula $C(\cdot\mid\theta)$ to model the dependence between $Y_1$ and $Y_2$ so that
	$H(y_1,y_2) = C\left(F_{1}(y_1),F_{2}(y_2)\mid \theta\right)$, where $H(y_1,y_2)$ is the joint CDF of $Y_1$ and $Y_2$. Moreover, since $F_i$'s are continuous marginal CDFs, we can get a random vector $(U_1,U_2)=(F_1(Y_1),F_2(Y_2))$ such that the marginals follow a uniform distribution. Now, let $u_i = F_{ix}(y_i)$ be the pseudo observations and $F_{ix}^{-1}(u_i)$s be the conditional quantile functions for $i=1,2$. Then
	\begin{equation*}
		C\left(u_1,u_2\mid \theta\right) = H\left(F_{1}^{-1}(u_1),F_{2}^{-1}(u_2)\right).
	\end{equation*}
	We also use this formulation to compute Kendall's $\tau$ in the following way \citep{nelsen2006}:
	\begin{equation*}
		\tau = 4\int \int C(u_1,u_2\mid \theta)dC(u_1,u_2 \mid \theta) - 1.
	\end{equation*}
	In many practical situations, the dependence between $Y_1$ and $Y_2$ can be influenced by an external variable $X$. To tackle such issues, \citet{patton2006} suggested a conditional version of \citet{sklar:1959}'s theorem. Let, $H_x(y_1,y_2)$ denote joint cumulative distribution of $Y_1$ and $Y_2$ conditional on $X$ and $F_{ix}(y_i)$ is the marginal CDF of $Y_i$ conditional on $X$ for $i=1,2$. Then we can write the following $H_x(y_1,y_2) = C\left(F_{1x}(y_1),F_{2x}(y_2)\mid \theta(x)\right)$, where $\theta(x)$ is the copula parameter as a function of $x$. Then the following two hold:
	\begin{align*}
		C\left(u_1,u_2\mid \theta(x)\right) = H\left(F_{1x}^{-1}(u_1),F_{2x}^{-1}(u_2)\right)
	\end{align*}
	and
	\begin{align*}
		\tau(x) = 4\int \int C(u_1,u_2\mid \theta(x)dC(u_1,u_2 \mid \theta(x) - 1
	\end{align*}
	where $F_{ix}$ is the conditional version of $F_{i}$ for $i=1,2$ and $\tau(x)$ is the conditional version of Kendall's $\tau$. For a more detailed introduction to the concepts, see \citet{patton2006,acar2010,GIJBELS20111919} etc.
	
	\subsection{Loss-based BART}
	
	Before defining the loss-based prior for BART, we first discuss regression tree models. Let $T$ denote a regression tree \citep{chipman98BCART} with a set of internal nodes depicting decision rules and a set of terminal nodes that assigns functional values corresponding to decision rules. We denote this set of terminal node values by $M =$ $\{\mu_1$,$\mu_2$, \dots, $\mu_{n_L(T)}\}$ where $n_L(T)$ is the number of terminal nodes of the tree $T$. Then we can model output $Z\coloneqq(Z_1,\cdots,Z_n)$ via:
	\begin{equation*}
		Z_i = g(x_i, T, M) + \epsilon_i; \qquad \epsilon_i\sim \mathcal{N}(0,\sigma^2)
	\end{equation*} 
	where $g$ is a function that assigns a terminal node value to $x_i$ for $i=1,\cdots,n$ and $\epsilon_i$ is the associated noise with variance $\sigma^2$. This is a single tree model developed by \citet{chipman98BCART}, which is extended to a sum of trees model by \citet{chipman2010BART} as:
	\begin{equation}\label{eq:BART}
		Z_i = \sum_{t=1}^m g(x_i, T_t, M_t) + \epsilon_i; \qquad \epsilon_i\sim \mathcal{N}(0,\sigma^2)
	\end{equation}
	where $T_t$ denotes the $t$-th tree; $M_t$ denotes the $t$-th vector of terminal node values $M_t =$ $\{\mu_1$,$\mu_2$, \dots, $\mu_{n_L(T_t)}\}$ of the $t$-th tree; and $n_L(T_t)$ denotes the number of terminal nodes of the $t$-th tree. This representation allows us to approximate a function with piecewise constant values on a partitioned domain, which are constructed by assigning a splitting rule $x_{\cdot j}\le \kappa$ for $1\le j \le p$ on each internal node, where the value $\kappa$ is either chosen from one of the observed values $x_{ij}$ or chosen uniformly within a range of values $(\underline{x}_{\cdot j},\overline{x}_{\cdot j})$.
	
	The most popular choice for a tree prior was introduced by \citet{chipman98BCART}, which was later used by \citet{chipman2010BART} for sum of regression trees models. Recently, \citet{serafini2024lossbasedpriortreetopologies} proposed an alternative prior for the tree topology. They consider a loss-based approach, which allows us to specify objective priors \citep{villa_loss-prior}. They suggested that the associated loss function for misspecification of a tree has two components: loss in information and loss in complexity. Further, they showed that the loss in complexity can be defined as $-\omega n_L(T_t) - \zeta \Delta(T_t)$, where $\omega\ge 0$ and $\zeta\in\mathbb{R}$ are prior parameters; $\Delta(T_t)$ is the difference between the number of right terminal nodes and left terminal nodes of $t$-th given tree and $n_L(T_t)$ is the number of terminal nodes of the $t$-th tree. So we consider $\pi(T_t) \propto \exp\left(-\omega n_L(T_t) - \zeta \Delta(T_t)\right)$ where $\pi$ denotes the density function.
	
	\section{Conditional Copula Modelling}\label{sec:cond:cop}
	
	Our main goal in this paper is to model the dependence structure of a conditional copula using regression trees. However, the conditional copula parameter $\theta(x)$ may have a specific range. Therefore, we use a suitable link function $h$ that maps a sum of trees model to the range of $\theta(x)$ so that $\theta(x_i) \coloneqq h\left(\sum_{t=1}^m g(x_i, T_t, M_t)\right)$ for $1\le i\le n$. 
	
	Now, let $c\left(u_1,u_2\mid \theta(x)\right)$ denote the conditional copula density function. Then, along with the loss-based prior for trees, we can define the following hierarchical model 
	\begin{align}\label{eq:bayes:hier}
		\begin{split}
			\pi\left(u_{1i},u_{2i} \mid \theta(x_i)\right) & = c\left(u_{1i},u_{2i}\mid h\left(\sum_{t=1}^m g(x_i, T_t, M_t)\right)\right)\\
			\pi(T_t) &\propto \exp\left(-\omega n_L(T_t) - \zeta \Delta(T_t)\right)\\
			\pi(M_t\mid T_t) &= \prod_{j=1}^{n_L(T_t)}\pi(\mu_j\mid T_t); \qquad t = 1,2,\cdots m.
		\end{split}
	\end{align}
	
	\paragraph*{Choice of link functions} The choice of link function to relate the tree structure with the conditional copula parameter is dependent on the family of the copula. Use of a link function is not unusual in the context of conditional copula modelling. \citet{ABEGAZ201243,valle_cond_cop} used link functions for calibrating the conditional copula parameter. In our case, we wish to keep the sum of trees model flexible so that it can take any value in $\mathbb{R}$. Therefore, we consider link functions that can map from $\mathbb{R}$ to the range of $\theta$ from a specific copula family, that we provide in \cref{tab:cop:link} for 5 different copula families.
	
	\paragraph*{Choice of priors on $\mu_j$}
	Earlier adaptions of BART models \citep{chipman2010BART,Sparapani_BART,Murray03042021} suggested a conjugate prior for the terminal node values $\mu_j$, as this allows us to compute the marginal likelihood easily. In our case, this is not possible so we consider we consider a default $\mathcal{N}(0,\sigma_{t}^2)$ on $\mu_j$ as suggested by \citet{chipman2010BART,Linero02012025} where $\sigma_{t}^2$ is a variance specific to the $t$-th tree. For $\sigma_{t}^2$ we use a flat inverse-gamma hyperprior.
	
	\paragraph*{Choice of $m$} The choice of the total number of trees remains an open problem. \citet{chipman2010BART} considered a default of 500 trees, whereas \citet{Linero02012025} suggested 200 trees for analysis. In our case, we follow a similar approach to that of \citet{serafini2024lossbasedpriortreetopologies}. We start with 5 trees and increase it by 5 at a time to monitor the change in performance for case studies. 
	
	\section{RJ-MCMC for Parameter Estimation}\label{sec:rjmcmc}
	
	In this section, we provide the MCMC algorithm for sampling from the posterior. In general, for BART models with a conjugate prior, a Metropolis Hastings (MH) type algorithm is used. In our case, we need to sample terminal node values alongside a new tree structure in each MCMC iteration. So we will employ a reversible jump MCMC type algorithm first proposed by \citet{green_RJMCMC}. Since, RJ-MCMC type algorithms are known for their slow mixing properties, we propose an adaptive alternative that improves the mixing speed, without needing to spend significant time on selecting the optimal parameter values of the proposal distribution.
	
	\subsection{Backfitting algorithm}
	
	The major building block of the sum of tree models is the backfitting algorithm. We modify this backfitting algorithm to meet our needs, which we will discuss below. First, let $(U_1,U_2)$ denote the random variable associated with the pseudo observations $\{(u_{1i},u_{2i}):1\le i \le n\}$ and $X$ denote the covariates. Then the posterior obtained from the hierarchical model described in \cref{eq:bayes:hier} is proportional to the following:
	\begin{align*}
		\begin{split}
			\prod_{i=1}^{n}c\left(u_{1i},u_{2i}\mid h\left(\sum_{t=1}^m g(x_i, T_t, M_t)\right)\right)\prod_{t=1}^{m}\pi(T_t)\prod_{t=1}^{m}\left(\prod_{j=1}^{n_L(T_t)}\pi(\mu_j\mid T_t)\right)\prod_{t=1}^{m}\pi(\sigma_{t}).
		\end{split}
	\end{align*}
	To sample from such posterior formulations, \citet{chipman2010BART} suggested a Bayesian backfitting algorithm where we sample the $k$-th pair $(T_k,M_k)$ conditional on the other $m-1$ pairs of $(T_t,M_t)$. To be more precise, let $(T_{-k},M_{-k})\coloneq \left\{(T_1,M_1),\cdots,(T_{k-1},M_{k-1}),(T_{k+1},M_{k+1}),\cdots(T_m,M_m)\right\}$,
	then in each MCMC iteration, we sample $(T_k, M_k)$  and $\sigma^2_k$ from the following conditional distributions: 
	\begin{equation*}
		(T_k,M_k)\mid T_{-k},M_{-k}, \sigma^2_{k}, U_1, U_2, X \qquad \text{and}\qquad \sigma^2_{k} \mid (T_k,M_k),U_1,U_2,X.
	\end{equation*}
	So, in order to implement the backfitting algorithm, we define 
	\begin{equation*}
		R_{ik} = \sum_{t\not=k}g(x_i, T_t, M_t)\quad\text{and}\quad R_{\cdot k}\coloneqq(R_{1k},R_{2k},\cdots,R_{nk}).
	\end{equation*}
	We will add these $R_{ik}$'s to the terminal node values of the $k$-th regression tree:
	\begin{align*}
		\pi(T_k,M_k \mid R_{\cdot k}, \sigma^2_{k}, U_1,U_2, X) &\propto \pi(T_k)\prod_{i=1}^{n}c\left(u_{1i},u_{2i}\mid h\left(R_{ik}+g(x_i, T_k, M_k)\right)\right)\prod_{j=1}^{n_L(T_k)}\pi(\mu_j\mid T_k)
	\end{align*}
	For the standard BART model, the use of a conjugate prior allows us to marginalise the likelihood with respect to $\mu\coloneqq\left(\mu_1,\mu_2,\cdots,\mu_{n_L(T_k)}\right)$. Unfortunately, we do not have the conjugacy property in our case, and we use a reversible jump MCMC algorithm \citep{green_RJMCMC} to sample from the posterior. We follow, a similar setup to that of \citet{Linero02012025} to build our algorithm without the Laplace approximation.
	
	\subsection{Proposal for RJ-MCMC}
	
	Reversible jump MCMC is a common strategy for trans-dimensional cases where we need to grow or reduce the dimension of our model. This allows us to work around the issue of not having a conjugate prior for the terminal node values and we can sample $(T_k,M_k)$ efficiently in each iteration. We consider a proposal function to generate a new pair $\left(T_k^{\ast}, M_k^{\ast}\right)$ at the $(\eta+1)$-th iteration given by:
	\begin{align}\label{eq:prop}
		q\left(T_k^{\eta},M_k^{\eta};T_k^{\ast}, M_k^{\ast}\right) = q\left(T_k^{\eta};T_k^{\ast}\right) q_{\left(T_k^{\eta};T_k^{\ast}\right)}\left(M_k^{\eta};M_k^{\ast}\right).
	\end{align}
	Here, $q\left(T_k^{\eta};T_k^{\ast}\right)$ denotes the tree proposal as described by \citet{chipman98BCART}. They suggested four different tree steps: \textsc{grow} move to randomly choose a terminal node and split it into two terminal nodes; \textsc{prune} move to randomly choose a parent of terminal nodes and turn it into a terminal node; \textsc{change} move to randomly choose an internal node and assign a new splitting rule and \textsc{swap} move to randomly choose a parent-child pair of internal nodes and swap their splitting rules. We provide a detailed description and construction in the appendix.
	
	The generation of new terminal node values relies on the proposed tree structure that we represent with $q_{\left(T_k^{\eta};T_k^{\ast}\right)}\left(M_k^{\eta};M_k^{\ast}\right)$. Clearly, we only need a proposal for \textsc{grow} and \textsc{prune} steps to get new candidates, as the dimension does not change for the other two tree steps. Therefore, for the \textsc{grow} move, we consider the $j$-th leaf to be grown to $j_l$ and $j_r$ so,
	$q_{\left(T_k^{\eta};T_k^{\ast}\right)}\left(M_k^{\eta};M_k^{\ast}\right)$ = $\pi_{prop}(\mu_{j_l})\times\pi_{prop}(\mu_{j_r})$ and for the \textsc{prune} move, we consider the $j_l$th and $j_r$th leaves to be pruned, then $q_{\left(T_k^{\eta};T_k^{\ast}\right)}\left(M_k^{\eta};M_k^{\ast}\right)$ = $\pi_{prop}(\mu_j)$
	where $\pi_{prop}$ denotes the proposal density function.
	
	For the choice of proposal we consider a normal distribution with mean being the terminal node values at the $\eta$-th iteration and variance $\sigma^2_{\text{prop}}$. We notice that choosing $\sigma^2_{\text{prop}}$ is rather difficult. Monitoring the likelihood gives us some idea but, due to slow mixing nature of RJ-MCMC, convergence may take a large number of iterations. So, we suggest an adaptive variance for the proposal. Our approach is motivated by the seminal work of \citet{haario_AMH} where they suggest the use of MCMC samples to update the covariance. They also provided a simple updating formula that reduced the computation cost of the covariance matrix. However, in our case, we need to adapt the method to facilitate the \textsc{grow} and \textsc{prune} moves. So we propose an adaptive covariance based on a partition of the predictor space. First, we define the following:
	\begin{definition}[Value at observation]\label{def:val:at:obs}
		Let $(T,M)$ be the regression tree along with the vector of terminal node values. Then the value at observation, $V_i$ is the terminal node value assigned to $x_i$. That is, $V_i = g(x_i, T,M)$.
	\end{definition}
	
	To perform our adaptive RJ-MCMC routine, we let our RJ-MCMC sampler run for $\eta_0$ iterations without any adaption. We collect the values at observations with respect to each tree and each iteration defined by: $V_{ik}^{\eta} = g\left(x_i,T_k^{\eta},M_k^{\eta}\right)$ for $1\le i \le n; 1\le \eta \le \eta_0$.
	
	Now, let $V_{.k}^{\eta}\coloneq (V_{1k}^{\eta}, \cdots, V_{nk}^{\eta})$ denote the vector of values at observation for the $k$-th tree at $\eta$-th iteration. We will use these vectors to compute the covariance matrix for all $\eta>\eta_0$ such that:
	\begin{equation*}
		C_k^{\eta} \coloneqq Cov\left(V_{\cdot k}^{1},V_{\cdot k}^{2},\cdots,V_{\cdot k}^{\eta}\right) + \epsilon\mathbf{I}_n \quad \text{for } \epsilon>0; \eta>\eta_0,
	\end{equation*}
	where $\mathbf{I}_n$ is the identity matrix of order $n$. This additional term $\epsilon\mathbf{I}_n$ ensures that the matrix is positive definite. Once we have this covariance matrix for the $t$-th tree, we can use this to calculate the proposal variance at a new terminal node value.
	
	Let $\mathbf{\Omega}^{k}\coloneqq \left\{\Omega_j^{k}\right\}_{j=1}^{n_L(T_k)}$ denote the partition of the observations created by the $k$-th tree. Then for any $\Omega_{j}^{k}\in \mathbf{\Omega}^{k}$, we can define a set of indices $\mathcal{I}_{j}^{k}\coloneqq \left\{i:x_i\in \Omega_{j}^{k}\right\}$ that we will use for calculating the proposal variance at the $j$-th terminal node of the $k$-th tree in the following way:
	\begin{equation}\label{eq:var:adapt}
		\sigma_{\text{prop};j}^2 \coloneqq \frac{2.4^2}{\left(\#\left\{\mathcal{I}_{j}^{k}\right\}\right)^3}
		\sum_{c\in\mathcal{I}_{j}^{k}}\sum_{d\in\mathcal{I}_{j}^{k}}\left[C_k^{\eta}\right]_{cd},
	\end{equation}
	where $\#(.)$ denotes the cardinality of a set. The scaling factor of $2.4$ is also used by \citet{haario_AMH} based on the optimal acceptance rate of Metropolis Hastings algorithms.
	
	Lastly, we update $C_k^{\eta+1}$ using the iterative formula for variances given by:
	\begin{equation*}
		C_k^{\eta+1} \coloneqq 
		\frac{\eta-1}{\eta} C_k^{\eta} 
		+ \frac{1}{\eta}\left(\eta 
		\left(\overline{V}_{\cdot k}^{\eta}\right)\left(\overline{V}_{\cdot k}^{\eta}\right)^T 
		- (\eta+1)\left(\overline{V}_{\cdot k}^{\eta+1}\right)\left(\overline{V}_{\cdot k}^{\eta+1}\right)^T 
		+ \left({V}_{\cdot k}^{\eta+1}\right)\left({V}_{\cdot k}^{\eta+1}\right)^T + \epsilon\mathbf{I}_n\right).
	\end{equation*}
	The formulation in \cref{eq:var:adapt} ensures that the proposal variance represents the variance of the mean value at observations at each terminal node, which is equivalent to using a marginal likelihood in traditional BART models. We summarise this algorithm in \cref{alg:ada:prop}.

	\begin{algorithm}[h]
		\caption{Computation of adaptive proposal}\label{alg:ada:prop}
		\begin{algorithmic}[1]
			\State Perform $\eta_0$ iterations with a fixed variance.
			
			\For{$k =1, \cdots, m$}
			\State Collect MCMC samples for initial $\eta_0$ iterations of the $k$-th tree:
			\begin{equation*}
				V_{ik}^{\eta} = g\left(x_i,T_k^{\eta},M_k^{\eta}\right)\quad 1\le i \le n; 1\le \eta \le \eta_0.
			\end{equation*}
			
			\State Calculate sample covariance matrix :
			\begin{equation*}
				C_k^{\eta} \coloneqq Cov\left(V_{\cdot k}^{1},V_{\cdot k}^{2},\cdots,V_{\cdot k}^{\eta}\right) + \epsilon\mathbf{I}_n \quad \text{for } \epsilon>0; \eta>\eta_0.
			\end{equation*}
			
			\State Calculate the proposal variance of the $j$-th terminal node value:
			\begin{equation*}
				\sigma_{\text{prop};j}^2 \coloneqq \frac{2.4^2}{\left(\#\left\{\mathcal{I}_{j}^{k}\right\}\right)^3}
				\sum_{c\in\mathcal{I}_{j}^{k}}\sum_{d\in\mathcal{I}_{j}^{k}}\left[C_k^{\eta}\right]_{cd};\qquad\mathcal{I}_{j}^{k}\coloneqq \left\{i:x_i\in \Omega_{j}^{k}\right\};\qquad 1\le j\le n_L(T_k).
			\end{equation*}
			
			\EndFor
			
			\State Update sample covariance using 
			\begin{equation*}
				C_k^{\eta+1} \coloneqq 
				\frac{\eta-1}{\eta} C_k^{\eta} 
				+ \frac{1}{\eta}\left(\eta 
				\left(\overline{V}_{\cdot k}^{\eta}\right)\left(\overline{V}_{\cdot k}^{\eta}\right)^T 
				- (\eta+1)\left(\overline{V}_{\cdot k}^{\eta+1}\right)\left(\overline{V}_{\cdot k}^{\eta+1}\right)^T 
				+ \left({V}_{\cdot k}^{\eta+1}\right)\left({V}_{\cdot k}^{\eta+1}\right)^T + \epsilon\mathbf{I}_n\right).
			\end{equation*}
		\end{algorithmic}
	\end{algorithm}
	
	Once we have the proposals for both trees and the terminal node values, we define the acceptance probability:
	\begin{equation}\label{eq:acc:prob}
		\alpha\left(T_k^{\eta},M_k^{\eta};T_k^{\ast}, M_k^{\ast}\right)
		= \min\left\{1,\frac{\mathcal{L}(U_1,U_2\mid T_k^{\ast},M_k^{\ast})\pi(T_k^{\ast},M_k^{\ast})q\left(T_k^{\ast}, M_k^{\ast};T_k^{\eta},M_k^{\eta}\right)}
		{\mathcal{L}(U_1,U_2\mid T_k^{\eta},M_k^{\eta})\pi(T_k^{\eta},M_k^{\eta}) q\left(T_k^{\eta},M_k^{\eta};T_k^{\ast}, M_k^{\ast}\right)}\right\},
	\end{equation}
	where $\mathcal{L}(U_1,U_2\mid T_k, M_{k})\coloneqq \prod_{i=1}^{n}c\left(u_{1i},u_{2i}\mid h\left(R_{ik}+g(x_i, T_k, M_k)\right)\right)$. Lastly, after sampling a new $(T_k^{\eta+1},M_k^{\eta+1})$, we update the values of $M_k^{\eta+1}$ using an MH step followed by a Metropolis within Gibbs step to update $\sigma_{k}^{2;(\eta+1)}$ conditional on $(T_k^{\eta+1},M_k^{\eta+1})$ using $\text{InvGamma}\left(a+\frac{n_L\left(T_k^{\eta+1}\right)}{2} , b + \frac{\sum_{j=1}^{n_L\left(T_k^{\eta+1}\right)}\mu_j^2}{2}\right)$ distribution. We summarise our sampling strategy in \cref{alg:MCMC}.
	
	\begin{algorithm}[h]
		\caption{$(\eta+1)$-th iteration MCMC sampling}\label{alg:MCMC}
		\begin{algorithmic}[1]
			\State Set $\theta(x_i) \leftarrow h\left(\sum_{t=1}^{m} g(x_i, T_t^{\eta},M_t^{\eta})\right)$ for $i = 1, \ldots, n$.
			\For{$k = 1, \ldots, m$}
			\State Set $R_{ik} \leftarrow \sum_{t\not=k}g(x_i, T_t^{\eta},M_t^{\eta})$ for $i = 1, \ldots, n$.
			
			\State Sample $(T_k^{\ast}, M_k^{\ast})$ by randomly choosing between the \textsc{grow}, \textsc{prune}, \textsc{change} and \textsc{swap}.
			
			\State Accept $\left(T_k^{\eta+1}, M_k^{\eta+1}\right)=\left(T_k^{\ast}, M_k^{\ast}\right)$ with probability $\alpha\left(T_k^{\eta},M_k^{\eta};T_k^{\ast}, M_k^{\ast}\right)$ or keep $\left(T_k^{\eta+1}, M_k^{\eta+1}\right)=\left(T_k^{\eta},M_k^{\eta}\right)$.
			
			\State Set $\theta(x_i) \leftarrow h\left(R_{ik} + g\left(x_i, T_k^{\eta+1}, M_k^{\eta+1}\right)\right)$ for $i = 1, \ldots, n$
			
			\State Update $M_k^{\eta+1}$ using an MH step targeting the full conditional.
			
			\State Sample $\sigma_{k}^{2;(\eta+1)}$ from $\text{InvGamma}\left(a+\frac{n_L\left(T_k^{\eta+1}\right)}{2} , b + \frac{\sum_{j=1}^{n_L\left(T_k^{\eta+1}\right)}\mu_j^2}{2}\right)$.
			\EndFor
		\end{algorithmic}
	\end{algorithm}

	\subsection{Convergence of the adaptive RJ-MCMC algorithm}
	
	In this section, we provide the theoretical justification of our proposed algorithm. We show that our adaptive RJ-MCMC algorithm converges. That is the Markov chain corresponding to the $k$-th tree, $\left\{\left(T_k^{\eta},M_k^{\eta}\right):\eta\in \mathbb{N}\right\}$ is ergodic. 
	
	For the sake of notational convenience, we drop the tree index $k$ as it plays no role. Now, let $\mathcal{S}\coloneqq\mathcal{T}\times \mathcal{M}$ be the state space for tree exploration equipped with a $\sigma$-algebra $\mathcal{F}$. Let,  $S^{\eta}\coloneqq\left(T^{\eta},M^{\eta}\right) \in \mathcal{S}$ be the $\eta$-th state of a tree and let $\Gamma^{\eta}\in \mathcal{Y}$ denote the random variable associated with the choice of kernel for updating from state $\eta$ to state $\eta+1$. Then we can define a family of kernels $\left\{\mathcal{K}_{\gamma}:\gamma\in\mathcal{Y}\right\}$ in the following way:
	\begin{equation*}
		P\left(S^{\eta+1} \in A\mid S^{\eta} = s, \Gamma^{\eta} = \gamma, \mathcal{G}^{\eta}\right) 
		= \mathcal{K}_{\gamma}(s;A)
		= \int_{A}\mathcal{K}_{\gamma}(s;ds^{\ast})
	\end{equation*}
	where $\mathcal{G}^{\eta}$ is a filtration generated by $\left\{(S^{\eta}, \Gamma^{\eta})\right\}$. That is, $\mathcal{G}^{\eta}$ an increasing sequence of $\sigma$-algebra such that $\mathcal{G}^{\eta} \coloneqq \sigma\left(S^1,\cdots,S^{\eta},\Gamma^1,\cdots,\Gamma^{\eta}\right)$.
	
	To ensure an ergodic RJ-MCMC algorithm, we require some assumptions/regularity conditions. We state these assumptions below.
	\begin{assumption}\label{ass:bounded:cop}
		$\exists$ \(\delta>0\) such that the copula density is bounded. That is, $0 < c(u_1,u_2\mid\theta(h(g(x,T,M)))) < \infty$ for all $(u_1,u_2)\in[\delta,1-\delta]^2$ and for all $M \in \mathcal{M}$
	\end{assumption}
	\begin{assumption}\label{ass:bounded:prior}
		All prior densities for the terminal node values are bounded above.
	\end{assumption}
	\begin{assumption}\label{ass:bounded:prop}
		All proposal densities of the terminal node values are uniformly bounded.
	\end{assumption}
	\begin{assumption}\label{ass:bounded:tree}
		With $n$ observations, the number of terminal nodes $n_L(T)\le n$ and for any tree with $n_L(T)\ge 2$, there exists at least one internal node that can be pruned.
	\end{assumption}
	
	\cref{ass:bounded:cop} is not very restrictive as in practice, we often use rank based pseudo observations for modelling copulas. \cref{ass:bounded:prior} is a direct consequence of choosing a prior density with finite variance and \cref{ass:bounded:prop} can be justified by the construction of the covariance matrix $C^{\eta}_k$. Since $C^{\eta}_k$ is uniformly bounded by construction \citep{haario_AMH}, the proposal variance obtained from $C^{\eta}_k$ are also uniformly bounded leading to a bounded proposal density. Lastly, \cref{ass:bounded:tree}, can be justified through the fact that at least one observation must lie in a terminal node and a binary regression tree always have an internal node with two terminal nodes. Now we state the main result regarding the ergodicity.
	
	\begin{theorem}
		If \cref{ass:bounded:cop}-\cref{ass:bounded:tree} are satisfied and $\gamma$ adaption strategy is defined by \cref{alg:ada:prop}. Then the following two conditions hold:
		\begin{enumerate}
			\item Diminishing adaptation $\rightarrow$ $\left\|\mathcal{K}_{\Gamma^{\eta+1}}(s;\cdot)-\mathcal{K}_{\Gamma^{\eta}}(s;\cdot)\right\|_{TV}\to 0$ in probability as $\eta\to \infty$.
			\item Containment $\rightarrow$ $\forall\iota>0$, $\exists N = N(\iota)\in \mathbb{N}$ such that 
			\begin{equation*}
				\left\|\mathcal{K}^N_{\gamma}(s;\cdot) - \pi(\cdot\mid u_1,u_2)\right\|_{TV} \le \iota
			\end{equation*}
		\end{enumerate}
		where $\pi(\cdot\mid u_1, u_2)$ denote the target distribution defined on $\mathcal{S}$.
	\end{theorem}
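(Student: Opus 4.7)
The plan is to verify the two stated conditions, diminishing adaptation and containment, which together constitute the standard sufficient criterion of Roberts and Rosenthal for ergodicity of adaptive MCMC. Because only the Gaussian proposal variance on the terminal-node values is being adapted (the tree proposal $q(T^{\eta};T^{\ast})$ and all other ingredients of the kernel are non-adaptive), the problem reduces to controlling how the sequence $\{\sigma^{2}_{\text{prop};j}\}$ affects the transition kernel $\mathcal{K}_{\gamma}$, while making use of \cref{ass:bounded:cop}--\cref{ass:bounded:tree} to uniformly bound the likelihood, priors, proposals, and the size of the tree space.

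For diminishing adaptation, I would first expand the recursive update for $C_k^{\eta+1}$ and observe that the prefactors $\tfrac{\eta-1}{\eta}$ and $\tfrac{1}{\eta}$ together with the running-mean correction terms give $\|C_k^{\eta+1}-C_k^{\eta}\|=O(1/\eta)$ almost surely; this uses that $V^{\eta}_{\cdot k}$ is bounded (a consequence of \cref{ass:bounded:tree} and the fact that the terminal-node values have finite variance under the hyperprior). Because the map in \cref{eq:var:adapt} is a bounded linear function of the entries of $C_k^{\eta}$, and the regularisation $\epsilon\mathbf{I}_n$ keeps all entries bounded away from degeneracy, one obtains $|\sigma^{2,\eta+1}_{\text{prop};j}-\sigma^{2,\eta}_{\text{prop};j}|=O(1/\eta)$. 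A standard Scheffé/Pinsker estimate then shows that the total-variation distance between two Gaussian proposal densities with a common mean and nearby variances is also $O(1/\eta)$; lifting this pointwise bound to the full kernel $\mathcal{K}_{\Gamma^{\eta}}$ (which mixes the tree moves with the terminal-value proposal) yields the required convergence in probability.

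For containment, the strategy is to strengthen the usual requirement by establishing uniform ergodicity of $\mathcal{K}_{\gamma}$ in $\gamma$, so that the bound $N(\iota)$ can be chosen independently of $\gamma$. I would show a Doeblin-type minorisation condition of the form $\mathcal{K}^{N_{0}}_{\gamma}(s;A)\ge \varepsilon\,\nu(A)$ for some $N_{0}\in\mathbb{N}$, $\varepsilon>0$, and a probability measure $\nu$ on $\mathcal{S}$, uniformly in $\gamma$ and $s$. The three ingredients are: (i) \cref{ass:bounded:tree}, which makes the collection of admissible tree topologies finite, so that any target tree can be reached from any current tree in a bounded number of \textsc{grow}/\textsc{prune}/\textsc{change}/\textsc{swap} moves each with positive probability; (ii) \cref{ass:bounded:cop} and \cref{ass:bounded:prior}, which control the ratio of likelihoods times priors on a compact reference set in $M$-space away from zero and infinity; and (iii) \cref{ass:bounded:prop} together with the regularisation $\epsilon\mathbf{I}_n$, which makes the proposal density uniformly bounded above and, on any fixed compact set of terminal-node values, uniformly bounded below. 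Combining these gives $\|\mathcal{K}^{N}_{\gamma}(s;\cdot)-\pi(\cdot\mid u_{1},u_{2})\|_{TV}\le (1-\varepsilon)^{\lfloor N/N_{0}\rfloor}$, from which containment follows.

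The principal obstacle is the uniform-in-$\gamma$ lower bound in the minorisation: the adaptive proposal variance $\sigma^{2}_{\text{prop};j}$ could in principle drift towards zero (concentrating mass on the current $M^{\eta}$) or grow large (dispersing mass and diluting the lower bound on any fixed compact set). I would handle this by exploiting \citet{haario_AMH}'s argument that under the recursion above, the covariance $C_k^{\eta}$ is uniformly bounded (both below, via $\epsilon\mathbf{I}_n$, and above, via the boundedness of values at observations granted by \cref{ass:bounded:tree} and finiteness of $n$); this translates into a uniform two-sided bound on $\sigma^{2}_{\text{prop};j}$, so that the Gaussian proposal puts at least a uniformly positive mass on a fixed reference ball in $M$-space, enabling the minorisation to be chosen independently of $\gamma$.
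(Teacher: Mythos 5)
Your proposal is correct and follows essentially the same route as the paper: diminishing adaptation via the $O(1/\eta)$ change in the adapted covariance (hence in $\sigma^2_{\text{prop};j}$), continuity of the Gaussian proposal in total variation, and the uniform bounds on the acceptance ratios; containment via a uniform-in-$\gamma$ minorisation giving uniform ergodicity, with the Haario-type two-sided bounds on the adapted variance (the $\epsilon\mathbf{I}_n$ regularisation plus boundedness of the values at observations) supplying the uniformity. The only difference is cosmetic: the paper constructs the small-set bound by repeatedly applying \textsc{prune} moves down to the trivial single-node tree and invoking a holding probability there together with Tierney's proposition, whereas you minorise onto a generic reference set reachable by arbitrary moves, which is equivalent in substance.
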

	
	\begin{proof}
		We provide the proof of the theorem in the appendix.
	\end{proof}
	
	The above two conditions ensure that an adaptive MCMC algorithm is ergodic as shown by \citet{roberts-rosenthal-ada_MCMC2007} and hence our proposed adaptive RJ-MCMC step is ergodic. Moreover, updating the terminal node values from full conditional satisfies the diminishing adaptation and is performed by a simple Metropolis Hastings step with Gaussian proposal. Therefore, it is equivalent to adaptive Metropolis algorithm proposed by \citet{haario_AMH}, which is proven to be ergodic. Lastly, updating the variance term $\sigma^2_k$ is ergodic due to properties of Metropolis within Gibbs sampling. Since all components of the backfitting algorithm are ergodic, our proposed MCMC algorithm for the BART based conditional copula models is also ergodic.

	\section{Simulation Studies}\label{sec:sim}
	
	In this section, we present our analyses using synthetic datasets simulated from two known data-generating processes, each replicated 100 times to evaluate the empirical performance of our approach. The first data-generating process is designed to assess the efficiency of our method in recovering the true model. To be precise, we examine whether the method can avoid overfitting and accurately identify the true tree structure underlying the data. The second data-generating process involves a more complex function and is used to evaluate the efficiency of our method in estimating the dependence structure for general functional relationships.
	
	For both data-generating processes, we investigate the performance of the proposed method for five different copula families: Gaussian, Students-t, Clayton, Gumbel and Frank. As we discussed in \cref{sec:cond:cop}, to model the conditional copula parameter, we consider different link functions, which we summarise in \cref{tab:cop:link}. This way, we ensure that the terminal node values can take any value in $\mathbb{R}$ and the link function maps the sum of trees model to the range of the conditional copula parameter. A similar approach was taken by \citet{ABEGAZ201243,valle_cond_cop} for calibrating the conditional copula parameter.
	
	\begin{table}[h]
		\centering
		\begin{tabular}{l|c|c|c|c}
			& Gaussian \& Student-t & Clayton & Gumbel & Frank \\
			\hline
			Support 
			& $\rho \in (-1,1)$ 
			& $\theta \in (0,\infty)$ 
			& $\theta \in [1,\infty)$ 
			& $\theta \in \mathbb{R}\setminus\{0\}$ \\
			Link 
			& $h(x)=\frac{\exp(x)-1}{\exp(x)+1}$ 
			& $h(x)=\exp(x)$ 
			& $h(x)=\exp(x)+1$ 
			& $h(x)=x$ \\
		\end{tabular}
		\caption{List of copula families used for our analyses followed by the range of the conditional copula parameter and the corresponding link function}
		\label{tab:cop:link}
	\end{table}
	
	For our analyses, we simulate 200 observations where the $x_i$-th predictor is sampled from a uniform distribution $U(0,1)$ for $1\le i\le 200$. The two sets of the conditional Kendall's taus are defined using the following equations:
	\begin{equation}\label{eq:tree:tau}
		\tau_1(x_i) = \begin{cases}
			0.3 & x_i \le 0.33\\
			0.8 & 0.33 < x_i \le 0.66\\
			0.3 & 0.66 < x_i
		\end{cases}\qquad\&\qquad \tau_2(x_i) = 0.2\sin(2\pi x_i) + 0.5.
	\end{equation}
	
	We use these $\tau_j(x_i)$'s ($j=1,2; 1\le i\le 200$) to simulate copula data using the relevant functions from \texttt{VineCopula} package in \texttt{R}. As mentioned earlier, we sample 100 replicates of each copula dataset to monitor the posterior coverage of our proposed method.
	
	For modelling the dependence structure, we consider the default loss-based prior ($\omega = 1.62$ and $\zeta = 0.62$) as suggested by \citet{serafini2024lossbasedpriortreetopologies}. For the hyperprior on $\sigma_t^2$, we consider InvGamma(1,2). We run 3000 MCMC iterations with 4 chains for all 100 replicates of each dataset. We employ both variants of our proposed RJ-MCMC algorithm; with and without adaption. For the adaptive version of the algorithm, we use the first 500 iterations for adaption. For convenience, from now on, we will use C-BART to denote our RJ-MCMC algorithm without adaption and A-C-BART for the adaptive version. For posterior estimates, we discard the first 1500 iterations.
	
	One specific goal of our analyses is to understand the efficiency of our method in capturing the true structure. We consider the posterior expectations averaged over 100 replicates. Additionally, we are interested in the predictive performance of our approach and we consider RMSE, 95\% credible interval length and 95\% credible interval coverage. 
	
	Let, $n_R, n_C, n$ denote the number of replicates, number of chains and number of observations respectively; and $\overline{\theta}_{ijk}, {\theta}^{(2.5)}_{ijk}, {\theta}^{(97.5)}_{ijk}$ denote the posterior mean, 2.5th posterior quantile and 97.5th posterior quantile of the conditional copula parameter at the $i$-th observation in the $j$-th MCMC chain of the $k$-th replicate. Let $\theta_i$ denote the true value of the conditional copula parameter at the $i$-th observation. Then RMSE, 95\% average credible interval length (CI-length) and 95\% credible interval coverage (CI-cov) are defined as follows
	\begin{equation*}
		\text{RMSE} = \frac{1}{n_R}\sum_{k=1}^{n_R}\left(\frac{1}{n_C}\frac{1}{n}\sum_{j=1}^{n_C}\sum_{i=1}^{n}\left(\theta_i - \overline{\theta}_{ijk}\right)^2\right),
	\end{equation*}
	\begin{equation*}
		\text{CI-length} = \frac{1}{n_R}\frac{1}{n_C}\frac{1}{n}\sum_{k=1}^{n_R}\sum_{j=1}^{n_C}\sum_{i=1}^{n}\left({\theta}^{(97.5)}_{ijk}-{\theta}^{(2.5)}_{ijk}\right),
	\end{equation*}
	and
	\begin{equation*}
		\text{CI-cov} = \frac{1}{n_R}\frac{1}{n_C}\frac{1}{n}\sum_{k=1}^{n_R}\sum_{j=1}^{n_C}\sum_{i=1}^{n}\mathbb{I}\left({\theta}^{(97.5)}_{ijk}\le\theta_i\le {\theta}^{(2.5)}_{ijk}\right)
	\end{equation*}
	where $\mathbb{I}(.)$ is the indicator function.
	
	\subsection{Example with true tree structure} 
	
	We use the first case ($\tau_1(x)$) to investigate the efficiency of our proposed approach in capturing the tree structure. For this initial analysis, we consider a single tree. This way, we can check how close our posterior estimates are to the true number of terminal nodes ($3$) and true depth ($2$). For the Gaussian, Student-t, Clayton and Gumbel copulas, we set our initial proposal variance to be equal to $0.2$, whereas for the Frank copula, we set our proposal variance to be equal to $1$. The main reason behind this is the type of link function we chose for these copulas. Since for the frank copula, the link function is the identity, the algorithm takes many more iterations to reach the true likelihood region. We also perform a complete analyses with proposal variance equal to 0.2, which we provide in the appendix. To assess the efficiency of our approach in recovering the true model, we consider the following quantities: the posterior expectations of the number of terminal nodes and depth averaged over the 100 replicates, along with their respective standard deviations over 100 replicates. We denote these quantities by `Mean $\hat{n}_L$', `SD $\hat{n}_L$', `Mean $\hat{D}$' and `SD $\hat{D}$'.
	
	We present the results in \cref{tab:eff:ex1}. We notice that for both C-BART and A-C-BART our posterior estimates of the number of terminal nodes and depths are close to their true values. The slight deviation can be attributed to the tree exploration steps where the model tends to grow leaves in tree MCMC steps. We also present the averaged acceptance rate over 100 replicates along with the standard deviation. We notice that the acceptance rate lies within $[0.16, 0.18]$, which is a reasonable range for a transdimensional MCMC algorithm. A similar effect is reported by \citet{Linero02012025} regarding the mixing of the likelihood. Additionally, we present the prediction accuracy and posterior coverage in \cref{tab:pred:ex1}. We notice that C-BART and A-C-BART are in agreement, though C-BART tends to perform better in terms of posterior coverage.
	
	Recall, that for the Frank copula family, we fixed the initial proposal variance to be equal to 1, while the proposal variance for all other families was fixed at 0.2. We perform an additional analysis, fixing the proposal variance at 0.2 for the Frank copula. This case, gives us an insight into the benefit of the adaptive version of our algorithm. We notice that the posterior estimates for the number of terminal nodes and depth of the tree are closer to the true value for A-C-BART and the credible interval coverage is significantly better than C-BART. 
	
	\begin{table}[h]
		\centering
		\begin{tabular}{l|l|cccccc}
			& & Mean $\hat{n}_L$ & SD $\hat{n}_L$ & Mean $\hat{D}$ & SD $\hat{D}$ & Mean Acc. & SD Acc. \\ 
			\hline
			C-BART & Gaussian & 3.149 & 0.110 & 2.059 & 0.095 & 0.161 & 0.011 \\ 
			& Student-t & 3.109 & 0.114 & 2.016 & 0.093 & 0.161 & 0.010 \\ 
			& Clayton & 3.177 & 0.065 & 2.046 & 0.055 & 0.179 & 0.007 \\ 
			& Gumbel & 3.143 & 0.044 & 2.033 & 0.036 & 0.171 & 0.010 \\ 
			& Frank & 3.308 & 0.167 & 2.190 & 0.142 & 0.174 & 0.011 \\
			\hline
			A-C-BART & Gaussian & 3.141 & 0.107 & 2.049 & 0.093 & 0.161 & 0.011 \\ 
			& Student-t & 3.084 & 0.150 & 1.995 & 0.134 & 0.160 & 0.011 \\ 
			& Clayton & 3.199 & 0.093 & 2.065 & 0.079 & 0.179 & 0.008 \\ 
			& Gumbel & 3.146 & 0.045 & 2.035 & 0.034 & 0.170 & 0.010 \\ 
			& Frank & 3.251 & 0.174 & 2.141 & 0.150 & 0.172 & 0.010 \\
		\end{tabular}
		\caption{Performance of our proposed methods for simulated datasets using a tree based conditional Kendall's tau ($\tau_1(x)$). We present the average and standard deviation of the posterior expectation of the number of terminal nodes; the posterior expectation of the depth of the tree; and the acceptance rate.}
		\label{tab:eff:ex1}
	\end{table}
	
	\begin{table}[ht]
		\centering
		\begin{tabular}{l|ccc|ccc}
			\multicolumn{1}{c|}{} &
			\multicolumn{3}{c|}{C-BART} &
			\multicolumn{3}{c}{A-C-BART} \\
			\hline
			& RMSE & CI-length & CI-cov & RMSE & CI-length & CI-cov \\ 
			\hline
			Gaussian & 0.075 & 0.218 & 0.932 & 0.076 & 0.219 & 0.921 \\ 
			Student-t & 0.093 & 0.277 & 0.930 & 0.098 & 0.274 & 0.908 \\ 
			Clayton & 0.067 & 0.192 & 0.925 & 0.067 & 0.192 & 0.923 \\ 
			Gumbel & 0.069 & 0.217 & 0.959 & 0.070 & 0.216 & 0.936 \\ 
			Frank & 0.083 & 0.251 & 0.930 & 0.083 & 0.254 & 0.925 
		\end{tabular}
		\caption{Prediction accuracy of our proposed methods for simulated datasets using a tree based conditional Kendall's tau ($\tau_1(x)$). We split our results into two general columns: left for C-BART, right for A-C-BART. We create subcolumns under each column to present root mean squared error (RMSE); average 95\% credible interval length (CI-length); and 95\% credible interval coverage (CI-cov).}
		\label{tab:pred:ex1}
	\end{table}
	
	\subsection{Example with a more general function}
	We use the second case ($\tau_2(x)$) to assess the prediction accuracy of our proposed approach in estimating the dependence structure, which in this case is highly non-linear with respect to the conditioning variable. For illustration purposes, we use 5 trees to model the conditional copula. To illustrate the prediction accuracy, we used RMSE, average CI length and CI coverage. For posterior analysis we set the proposal variance to be equal to 0.2 for all copula families including Frank.
	
	We present these results in \cref{tab:pred:ex2}. We notice that A-C-BART and C-BART are in good agreements in terms of prediction accuracy. However, unlike our results for the first case, we notice that A-C-BART indeed performs better in terms of prediction as well as posterior coverage. This happens as A-C-BART can effectively have proposal variances of different scales for different trees, which is not the case for C-BART. As a result, A-C-BART is more efficient, when we have multiple trees.
	
	\begin{table}[h]
		\centering
		\begin{tabular}{l|ccc|ccc}
			\multicolumn{1}{c|}{} &
			\multicolumn{3}{c|}{C-BART} &
			\multicolumn{3}{c}{A-C-BART} \\
			\hline
			& RMSE & CI-length & CI-cov & RMSE & CI-length & CI-cov \\ 
			\hline
			Gaussian & 0.074 & 0.318 & 0.961 & 0.070 & 0.321 & 0.968 \\ 
			Student-t & 0.082 & 0.377 & 0.978 & 0.082 & 0.381 & 0.981 \\ 
			Clayton & 0.073 & 0.304 & 0.960 & 0.070 & 0.306 & 0.966 \\ 
			Gumbel & 0.079 & 0.329 & 0.952 & 0.077 & 0.333 & 0.957 \\ 
			Frank & 0.070 & 0.278 & 0.926 & 0.068 & 0.284 & 0.940 \\  
		\end{tabular}
		\caption{Prediction accuracy of our proposed method for simulated datasets using a non-linear conditional Kendall's tau ($\tau_2(x)$). We split our results into two general columns: left for C-BART, right for A-C-BART. We create subcolumns under each column to present root mean squared error (RMSE); average 95\% credible interval length (CI-length); and 95\% credible interval coverage (CI-cov).}
		\label{tab:pred:ex2}
	\end{table}
	
	Additional figures and tables for our analyses with synthetic datasets are provided in the appendix.
	
	\section{CIA world factbook data}\label{sec:cia}
	We present two case studies with real dataset to discuss the applicability of our method. We use the CIA world factbook data. The data is originally produced by the CIA for policymakers in the USA and contains important insights on different countries. We are particularly interested in the `life expectancy' and the `literacy' rates of male and female populations in different countries. As a conditioning factor, we consider the economy of these countries represented by the per capita gross domestic product. 
	
	To fit the copula, we first get the pseudo observations by converting the original data into their ranks and scale them to ensure the values lie between 0 and 1. That is, for $n$ observations of a  pair $(y_{1i}, y_{2i})$; $1\le i\le n$, we define the pseudo observations in the following way $u_{ij} = \frac{r_{ij}}{n+1}$ for $1\le i \le n;\ j = 1,2$ where $r_{ij}$ is the rank of $y_{ij}$ among $(y_{1j}, y_{2j},\cdots, y_{nj})$ for $j=1,2$. For illustration purposes we run 4 parallel chains of 50000 MCMC iterations with 5 and 10 trees for both C-BART and A-C-BART. For A-C-BART, we use the first 500 iterations to calculate the proposal variance. For monitoring goodness-of-fit we consider two different two-sample tests to understand the goodness-of-fit of the simulated values from the fitted copulas: the Cramer test for multivariate data \citep{BARINGHAUS2004190}; and the Fasano-Franceschini test \citep{fasano-franceschini}, which is a generalised version of Kolmogorov–Smirnov test. These tests are designed to check if two sets of samples belong to the same distribution. The null hypothesis is rejected if the $p$-value is below $0.05$. Since these are permutation based tests, we simulate 100 replicates of copulas using estimated parameter and perform these tests with 1000 permutations.
	
	\subsection{Life Expectancy}
	We analyse the dependence between the female life expectancy and male life expectancy conditioned on the per capita GDP in log-scale. We use a total 221 observations of countries/territories for our analyses, out of which only 180 countries/territories have unique per capita GDP. We present the empirical distributions of these observations against the log GDP in \cref{fig:data:dist:LE} along with the pseudo observations of female and male life expectancies. We see that the female life expectancy lies within $[56.1,93.7]$ with the average being 77 years. For the male population, these numbers are slightly lower; the range being $[52.8, 86]$ and the average being 72 years. We notice that the life expectancy has a very strong tail dependence and the Kendall's tau is approximately equal to 0.83. 
	
	\begin{figure}
		\centering
		\includegraphics[width = 0.95\linewidth]{"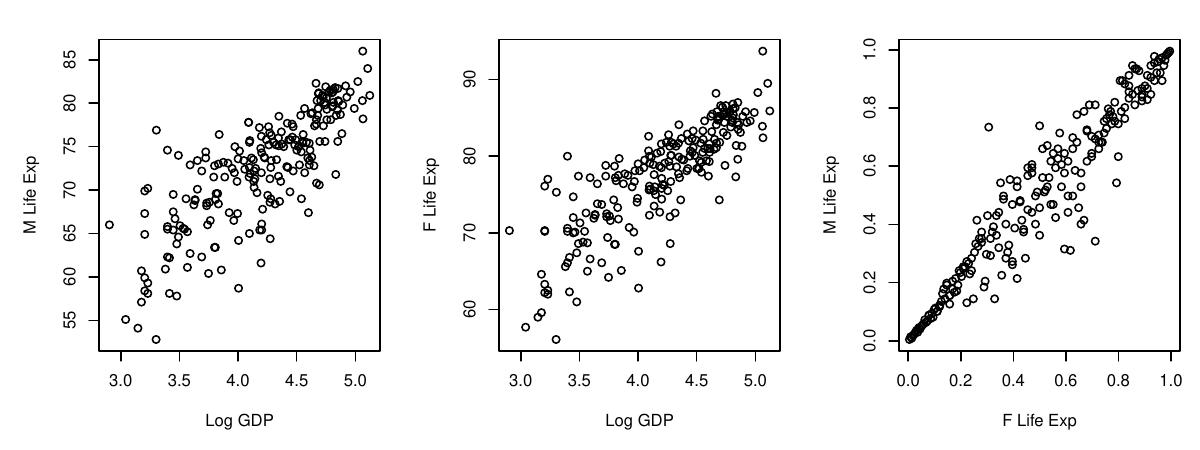"}
		\caption{Scatterplot of male life expectancy against log-GDP (left); female life expectancy against log-GDP (middle); and pseudo observations of female and male life expectancies (right).}
		\label{fig:data:dist:LE}
	\end{figure}
	
	We notice that the pseudo observations have an elliptical shape. Therefore, to model the dependence structure we consider Gaussian and Student-t copula families. We present the conditional taus in \cref{fig:taus:LE}, where the red (blue) solid line shows expected conditional taus for the Gaussian (Student-t) copula and the dashed lines show the credible interval. We notice that for both C-BART and A-C-BART the estimated conditional Kendall's taus are in good agreement for both families of copula except for one specific point, where the conditional Kendall's tau for Gaussian copula family gives us a very low value. This can be explained from \cref{fig:data:dist:LE} as well. We can see that for the same value of log-GDP there are outliers present in the scatterplots. We notice that countries with lower per capita GDP shows a very strong dependence between male and female life expectancies with the value being close to 0.9. As the per capita GDP increases, we notice that the dependence starts decreasing. For countries with extremely high per capita GDP, the dependence between male and female life expectancy stabilises near 0.8. This can also be verified from the scatterplots in \cref{fig:data:dist:LE}, where the shape of the empirical distributions of male and female life expectancies show strong resemblance for lower values of per capita GDP.
	
	\begin{figure}[h]
		\centering
		\includegraphics[width = 0.75\linewidth]{"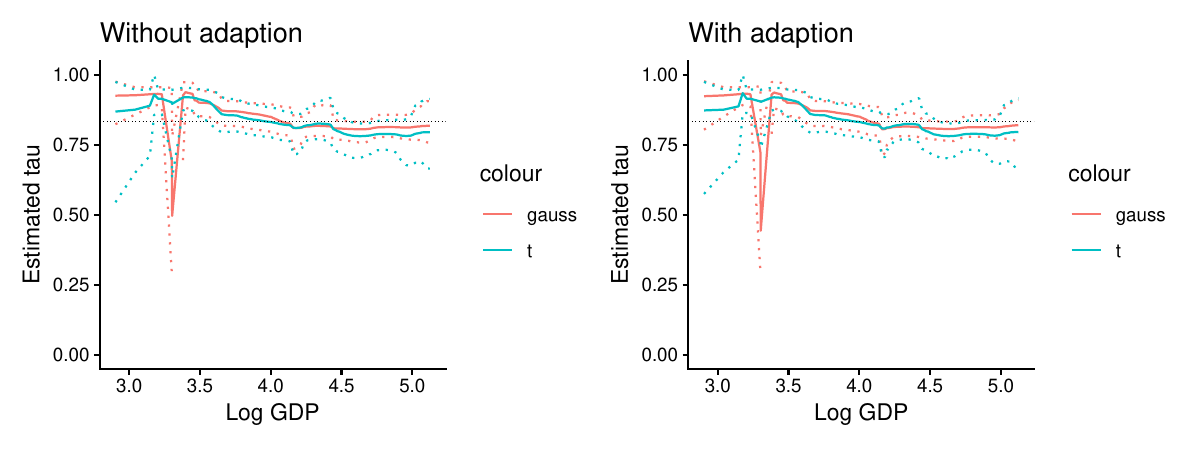"}
		\caption{Estimated dependence between male life expectancy and female life expectancy conditional on per capita log-GDP. For modelling, we use 5 trees and run 4 parallel chains of 50000 iterations. For posterior inference, we discard the first 5000 samples.}
		\label{fig:taus:LE}
	\end{figure}
	
	We present the traceplots of the log-likelihood in \cref{fig:trace:like:real:LE}. We notice that the likelihood stabilises in the same region for both C-BART and A-C-BART for the  Gaussian copula family. However, that is not the case for Student-t copula. We see that both C-BART and A-C-BART starts exploring higher likelihood regions for some chains. This is not completely unexpected as some of the countries have same per capita GDPs and that can lead to multimodality whilst estimating the conditional copula parameter.
	
	\begin{figure}[h]
		\centering
		\includegraphics[width = 0.75\linewidth]{"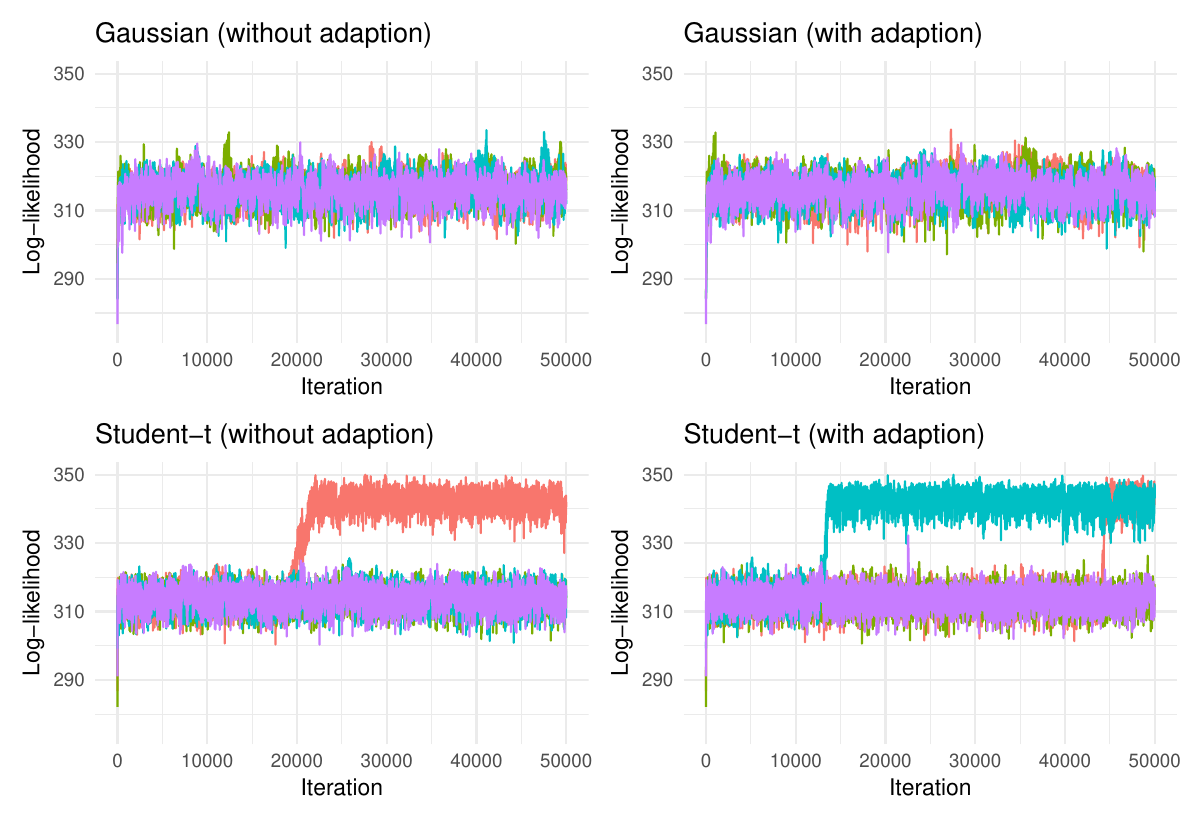"}
		\caption{Trace plots of the log-likelihood obtained from our analyses with life expectancies of the male and female populations. The plots are obtained by running 4 parallel chains, each with 50000 MCMC iterations and 5 trees. The left columns shows analyses with C-BART and the right column shows analyses with A-C-BART.}
		\label{fig:trace:like:real:LE}
	\end{figure}
	
	We present the results of goodness-of-fit tests in \cref{tab:LE:p-val}. We notice that the mean and median of $p$-values are well above 0.05 suggesting that there is no significant evidence that they are from two different distributions. Specifically, for the Student-t copula, the mean $p$-values are slightly higher when compared using the Fasano-Franceschini test. This suggets that Student-t copula is indeed a better choice for analysing the conditional dependence as the life expectancy of male and female populations exhibit a strong tail dependence. Moreover, we also present the goodness-of-fit test results for analyses with 10 trees. We notice no significant improvement in the goodness-of-fit test results. 
	
	We have provided additional figures for our analyses with 10 trees in the appendix. We notice that with 10 trees, our model tends to explore higher likelihood regions more frequently. However, it does not improve the goodness-of-fit. While it will be interesting to see whether the goodness-of-fit will improve after running more iterations of the MCMC chain, we refrain from doing so as we notice that the shape of the conditional Kendall's tau remains same. Albeit with more uncertainty accumulated.
	\begin{table}[h]
		\centering
		\begin{tabular}{l|l|ccc|ccc}
			\multicolumn{2}{c|}{} &
			\multicolumn{3}{c|}{Cramer test} &
			\multicolumn{3}{c}{FF test} \\
			\hline
			& & mean & median & sd & mean & median & sd \\ 
			\hline
			5 trees & Gaussian (C-BART) & 0.690 & 0.773 & 0.276 & 0.721 & 0.853 & 0.276 \\ 
			& Student-t (C-BART) & 0.694 & 0.777 & 0.272 & 0.735 & 0.853 & 0.270 \\ 
			& Gaussian (A-C-BART) & 0.689 & 0.774 & 0.276 & 0.721 & 0.844 & 0.278 \\ 
			& Student-t (A-C-BART) & 0.694 & 0.775 & 0.272 & 0.735 & 0.859 & 0.270 \\ 
			\hline
			10 trees & Gaussian (C-BART) & 0.689 & 0.774 & 0.276 & 0.710 & 0.828 & 0.283 \\ 
			& Student-t (C-BART) & 0.694 & 0.791 & 0.271 & 0.742 & 0.858 & 0.267 \\ 
			& Gaussian (A-C-BART) & 0.688 & 0.775 & 0.276 & 0.715 & 0.815 & 0.276 \\ 
			& Student-t (A-C-BART) & 0.694 & 0.793 & 0.271 & 0.739 & 0.849 & 0.270 \\ 
		\end{tabular}
		\caption{Goodness of fit tests of our proposed method for life expectancy conditional on log-GDP. We split our results into two general columns: left for the Cramer test and right for the Fasano-Franceschini test (FF test). We create subcolumns under each column to present the mean, the median and the standard deviation of the $p$-values obtained from 100 repetitions.}
		\label{tab:LE:p-val}
	\end{table}
	
	\subsection{Literacy}
	In this section, we discuss our analyses with the literacy rate of different countries. Unfortunately, the literacy rate is under-reported in the CIA world factbook dataset. We only have 167 observations with reported literacy rates and within this only 133 countries/territories have unique per capit GDP. We provide the scatterplots and distribution of pseudo observations for literacy rates of the male and female populations in \cref{fig:data:dist:LT}. For the female population, the literacy rate varies within $[18.2,100]$ and the average literacy rate is 84\%, whereas for the male population, the literacy rate lies in $[35.4,100]$ with the average literacy being 89\%. Unlike our example with the life expectancy data, the literacy rate does not have a linear trend against the log-GDP. However, Kendall's tau computed from the data is approximately 0.84, indicating a strong dependence between the literacy rates of the male and female populations. 
	\begin{figure}[h]
		\centering
		\includegraphics[width = 0.95\linewidth]{"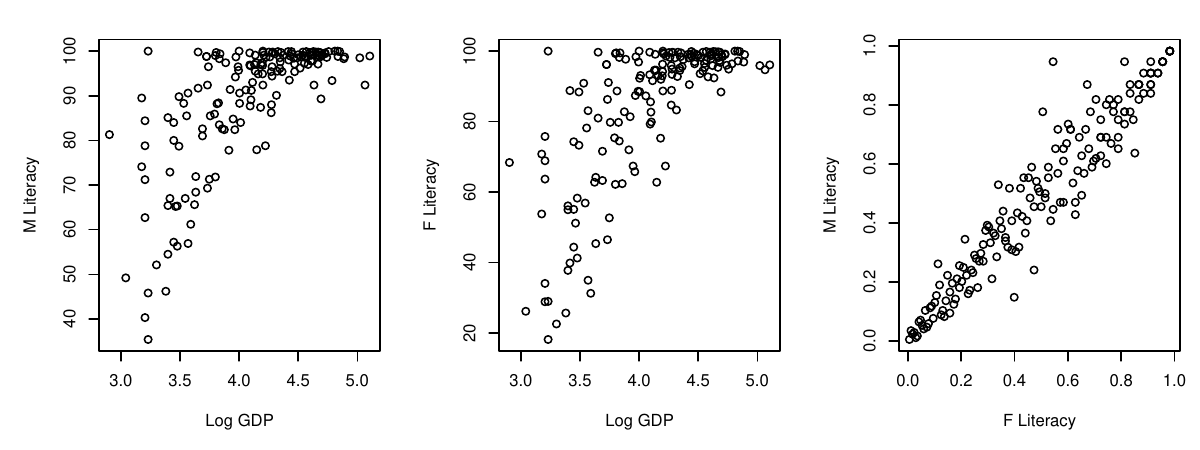"}
		\caption{Scatterplot of male literacy against log-GDP (left); scatterplot of female literacy against log-GDP (middle); and pseudo observations of female and male literacy rate (right).}
		\label{fig:data:dist:LT}
	\end{figure}
	
	Similar to our analyses with life expectancy, we notice that there is an overall agreement between the estimated conditional dependence using C-BART and A-C-BART. We notice that there is a strong similarity between the Gaussian and Student-t copula. We also see that the Gaussian copula exhibits an abrupt drop in the conditional dependence, which we present in \cref{fig:taus:LT}. Other than that, for both families, the literacy rates of the male and female populations have an almost constant dependence 
	with respect to the log-GDP.  We also present the traceplots of the log-likelihood in \cref{fig:trace:like:real:LT}. Unlike, our analyses with life expectancies, we notice that C-BART and A-C-BART tend to behave differently. For our analysis with the Gaussian copula, we notice that different chains remain at different likelihood regions. However, this is not the case for A-C-BART, where all the chains stabilises in the same likelihood region very quickly and two chains start exploring a higher likelihood region later. For our analysis with Student-t copula, the results are also very interesting. In this case, one of the chains of C-BART explores a higher likelihood region but exhibits a very high autocorrelation, whereas for A-C-BART, all the chains remain stable in the same region. 
	
	We present the results of goodness-of-fit tests in \cref{tab:LT:p-val}. Like before, we notice that there is no significant evidence that the simulated values from the fitted copulas and the pseudo observations generated from literacy rate data belong to two different families. Unlike our analyses with the life expectancy data, there is no clear winner. Additionally, we present our results with 10 trees and see no significant improvement.
	
	\begin{table}[ht]
		\centering
		\begin{tabular}{l|l|ccc|ccc}
			\multicolumn{2}{c|}{} &
			\multicolumn{3}{c|}{Cramer test} &
			\multicolumn{3}{c}{FF test} \\
			\hline
			& & mean & median & sd & mean & median & sd \\ 
			\hline
			5 trees & Gaussian (C-BART) & 0.691 & 0.775 & 0.272 & 0.708 & 0.782 & 0.249 \\ 
			& Student-t (C-BART) & 0.690 & 0.758 & 0.269 & 0.713 & 0.784 & 0.243 \\ 
			& Gaussian (A-C-BART) & 0.691 & 0.778 & 0.273 & 0.708 & 0.774 & 0.252 \\ 
			& Student-t (A-C-BART) & 0.690 & 0.761 & 0.269 & 0.711 & 0.785 & 0.244 \\ 
			\hline
			10 trees & Gaussian (C-BART) & 0.690 & 0.782 & 0.272 & 0.721 & 0.794 & 0.253 \\ 
			& Student-t (C-BART) & 0.689 & 0.757 & 0.268 & 0.708 & 0.750 & 0.241 \\ 
			& Gaussian (A-C-BART) & 0.691 & 0.786 & 0.272 & 0.721 & 0.778 & 0.251 \\ 
			& Student-t (A-C-BART) & 0.689 & 0.759 & 0.268 & 0.709 & 0.756 & 0.244 \\ 
		\end{tabular}
		\caption{Goodness of fit tests of our proposed method for literacy conditional on log-GDP. We split our results into two general columns: left for the Cramer test and right for the Fasano-Franceschini test (FF test). We create subcolumns under each column to present the mean, the median and the standard deviation of the $p$-values obtained from 100 repetitions.}
		\label{tab:LT:p-val}
	\end{table}
	\begin{figure}[h]
		\centering
		\includegraphics[width = 0.75\linewidth]{"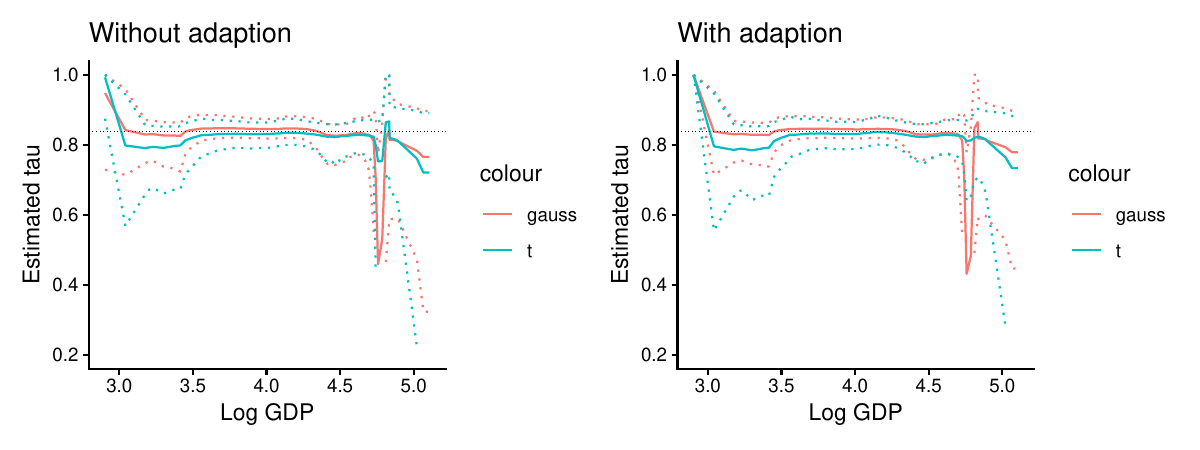"}
		\caption{Estimated dependence between male literacy and female literacy, conditional on log-GDP. For modelling, we use 5 trees and run 4 parallel chains of 50000 iterations. For posterior inference, we discard the first 5000 samples.}
		\label{fig:taus:LT}
	\end{figure}
	\begin{figure}[h]
		\centering
		\includegraphics[width = 0.75\linewidth]{"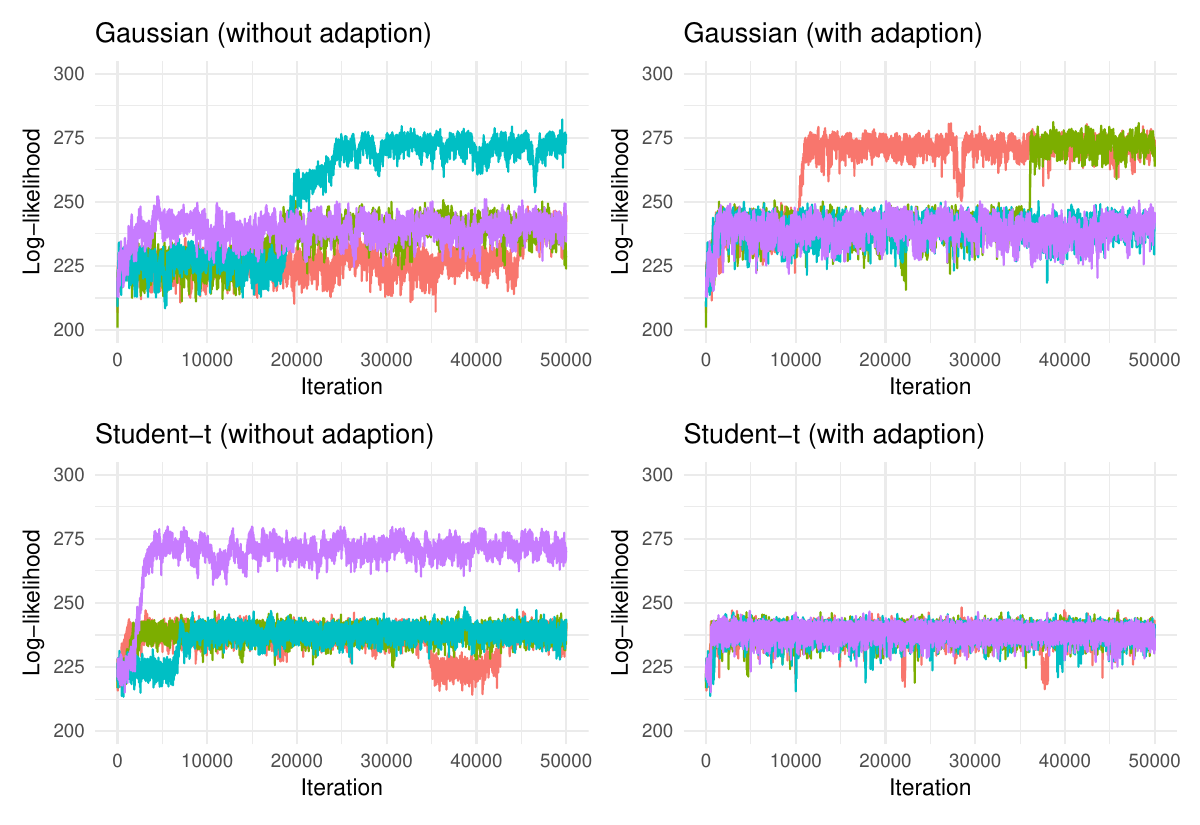"}
		\caption{Trace plots of the log-likelihood obtained from our analyses with literacy rates of the male and female populations. The plots are obtained by running 4 parallel chains, each with 50000 MCMC iterations and 5 trees. The left columns shows analyses with C-BART and the right column shows analyses with A-C-BART.}
		\label{fig:trace:like:real:LT}
	\end{figure}
	
	\section{Conclusion}\label{sec:conc}
	
	In this paper, we propose a Bayesian semi-parametric approach for modelling conditional copulas where we use the loss-based BART prior proposed by \citet{serafini2024lossbasedpriortreetopologies} to estimate the conditional copula parameter. To sample from the posterior, we propose a novel adaptive RJ-MCMC routine that can efficiently update the proposal variance of the terminal node values avoiding any manual tuning. In this way, we achieve a more general sampling algorithm for BART based models where we do not require a conjugate prior for posterior inference or a smooth likelihood function for applying other approximation based algorithms. Even though our proposed method is developed for copula modelling, the lack of restrictions in our model, makes it desirable for other areas of modelling where we can employ BART. In our framework, we use a link function to estimate the conditional copula parameter. However, in some cases, one might be interested in estimating the conditional Kendall's tau directly. Due to the flexible nature of our model, one can easily adapt our model to meet such needs by changing the link functions on the sum of trees model accompanied by a transformed beta prior \citep{gokhale_prior_cor} for the terminal node values.
	
	We perform extensive simulation studies to show the efficiency of our methods in recovering the true tree structure present within the data as well as their ability to approximate a complicated function efficiently. We notice that, specifically for complicated functions, where we require multiple trees, the adaptive routine performs better in reducing the prediction error, suggesting applicability. We also illustrate our method's adaptive capability using a case where our adaptive RJ-MCMC routine converges towards the true likelihood region even with a sub optimal choice for proposal variance, suggesting its usefulness in statistical analysis of real datasets where assigning proposal variance can be difficult. 
	
	We also present case studies with real dataset involving life expectancy and literacy rate of male and female populations of different countries. For conditioning, we consider per capita GDP of each country to see whether there is a dependence between male and female populations. We notice that our proposed method gives us some interesting results and empirical `goodness of fit' tests suggest that our method is capable of estimating the dependence structure correctly. Moreover, some of these case studies show the benefit of adaptation, as we notice that our adaptive routine tends to be more stable than the non-adaptive counterpart.
	
	While we get promising results with our method across simulation and real case studies, we still require an objective way to fix the number of trees in our BART based copula models. Currently, in our work we use the goodness-of-fit to see the improvement in models with respect to increasing number of trees. This gives us an empirical idea whether to use more trees or not. However, this makes our analyses computationally expensive as we need to fit multiple models to decide on the number of trees. Therefore, in the future, our main goal will be to find an efficient model selection routine to fix the number of trees. Moreover, in this article, we focus on bivariate copulas with a single conditioning variable. It will be an immediate objective to extend the approach to multivariate copulas in the presence of multiple conditioning variables to generalise the approach. Last but not the least, as we stated earlier, our sampling routine requires very few restrictions on the type of data we are using, therefore, it will be interesting to see the performance of our proposed method for more general classes of problems involving BART, where other methods may struggle. 
	
	\section{Disclosure statement}\label{disclosure-statement}
	
	The authors report that no conflicts of interest exist.
	
	\section{Data Availability Statement}\label{data-availability-statement}
	
	The CIA world factbook data was obtained from `Kaggle' repository (\url{https://www.kaggle.com/datasets/lucafrance/the-world-factbook-by-cia}).
	
	\spacingset{1.65}
	
	\bibliography{bibliography.bib}
	\appendix
    
	\section{Description of the RJ-MCMC algorithm}\label{app:rj-mcmc}
	
	We define our reversible jump MCMC algorithm based on the tree MCMC algorithm suggested by \cite{chipman98BCART} coupled with the generation of new terminal node values such that for the $k$-th tree at the $\eta$-th iteration, the proposal can be written in the following way:
	\begin{align}\label{eq:prop}
		q\left(T_k^{\eta},M_k^{\eta};T_k^{\ast}, M_k^{\ast}\right) = q\left(T_k^{\eta};T_k^{\ast}\right) q_{\left(T_k^{\eta};T_k^{\ast}\right)}\left(M_k^{\eta};M_k^{\ast}\right).
	\end{align}
	
	For each tree $T$, we associate four different sets of nodes that are crucial for tree MCMC algorithm. 
	
	\begin{itemize}
		\item $\mathcal{TN}(T_k)$: the set of all terminal nodes of a tree
		\item $\mathcal{IN}(T_k)$: the set of internal nodes
		\item $\mathcal{PN}(T_k)$: the set of all prunable nodes. That is all internal nodes which have two terminal nodes as children
		\item $\mathcal{PCN}(T_k)$: the set of all internal parent-child pairs.
	\end{itemize}
	
	To sample a new tree, we use four different tree moves \textsc{grow}, \textsc{prune}, \textsc{change} and \textsc{swap}. We define these moves in the following way:
	
	\begin{itemize}
		\item \textsc{grow}: We randomly choose a terminal node $j$ from $\mathcal{TN}(T_k)$ and grow it to two different terminal nodes $j_l$ and $j_r$ with splitting rule $x\le c$ sampled from $\pi_{\textsc{rule}}(l\mid c)$. 
		\item \textsc{prune}: We randomly choose an internal node $j$ from $\mathcal{PN}(T_k)$ and delete its children.
		\item \textsc{change}: We randomly choose an internal node $j$ from $\mathcal{IN}(T_k)$ and assign a new splitting rule.
		\item \textsc{swap}: We randomly choose a parent-child pair $(a,b)$ from $\mathcal{PCN}(T_k)$ and swap their splitting rules.
	\end{itemize}
	We illustrate these tree MCMC moves in \cref{fig:tree:MCMC}.
	
	\begin{figure}[h]
		\centering
		\begin{subfigure}{\textwidth}
			\centering
			\begin{tikzpicture}[every node/.style={font=\footnotesize}]
				\begin{scope}[shift={(-4.5cm,0)}] 
					\node (left) {
						\begin{tikzpicture}[
							level 1/.style={level distance=8mm, sibling distance=40mm},
							level 2/.style={level distance=8mm, sibling distance=20mm},
							treenode/.style={rectangle, draw=black, rounded corners,
								minimum width=14mm, minimum height=6mm, align=center},
							internal/.style={treenode, fill=blue!10},
							term/.style={treenode, fill=gray!10}]
							\node[internal] {$x_1 < c_1$}
							child { node[internal] {$x_2 < c_2$}
								child { node[term] {$\mu_1$} }
								child { node[term] {$\mu_2$} }
							}
							child { node[term] {$\mu$} };
						\end{tikzpicture}
					};
				\end{scope}
				
				\node at (0,0) {\Huge$\Longrightarrow$};
				
				\begin{scope}[shift={(4.5cm,0)}] 
					\node (right) {
						\begin{tikzpicture}[
							level 1/.style={level distance=8mm, sibling distance=40mm},
							level 2/.style={level distance=8mm, sibling distance=20mm},
							treenode/.style={rectangle, draw=black, rounded corners,
								minimum width=14mm, minimum height=6mm, align=center},
							internal/.style={treenode, fill=blue!10},
							term/.style={treenode, fill=gray!10}]
							\node[internal] {$x_1 < c_1$}
							child { node[internal] {$x_2 < c_2$}
								child { node[term] {$\mu_1$} }
								child { node[term] {$\mu_2$} }
							}
							child { node[internal] {$x_3 < c_3$}
								child { node[term] {$\mu_L$} }
								child { node[term] {$\mu_R$} }
							};
						\end{tikzpicture}
					};
				\end{scope}
			\end{tikzpicture}
			\caption{Original tree (left) and grown tree (right).}
		\end{subfigure}
		
		\begin{subfigure}{\textwidth}
			\centering
			\begin{tikzpicture}[every node/.style={font=\footnotesize}]
				\node (arrow) at (0,0) {\Huge$\Longrightarrow$};
				
				\begin{scope}[shift={(-4.5cm,0)}] 
					\node {
						\begin{tikzpicture}[
							level 1/.style={level distance=8mm, sibling distance=40mm},
							level 2/.style={level distance=8mm, sibling distance=20mm},
							treenode/.style={rectangle, draw=black, rounded corners, minimum width=14mm, minimum height=6mm, align=center},
							internal/.style={treenode, fill=blue!10},
							term/.style={treenode, fill=gray!10}]
							\node[internal] {$x_1 < c_1$}
							child { node[internal] {$x_2 < c_2$}
								child { node[term] {$\mu_1$} }
								child { node[term] {$\mu_2$} }
							}
							child { node[internal] {$x_3 < c_3$}
								child { node[term] {$\mu_L$} }
								child { node[term] {$\mu_R$} }
							};
						\end{tikzpicture}
					};
				\end{scope}
				
				\begin{scope}[shift={(4.5cm,0)}] 
					\node {
						\begin{tikzpicture}[
							level 1/.style={level distance=8mm, sibling distance=40mm},
							level 2/.style={level distance=8mm, sibling distance=20mm},
							treenode/.style={rectangle, draw=black, rounded corners, minimum width=14mm, minimum height=6mm, align=center},
							internal/.style={treenode, fill=blue!10},
							term/.style={treenode, fill=gray!10}]
							\node[internal] {$x_1 < c_1$}
							child { node[internal] {$x_2 < c_2$}
								child { node[term] {$\mu_1$} }
								child { node[term] {$\mu_2$} }
							}
							child { node[term] {$\mu$} };
						\end{tikzpicture}
					};
				\end{scope}
			\end{tikzpicture}
			\caption{Full tree (left) and pruned tree (right).}
		\end{subfigure}
		
		\begin{subfigure}{\textwidth}
			\centering
			\begin{tikzpicture}[every node/.style={font=\footnotesize}]
				\node (arrow) at (0,0) {\Huge$\Longrightarrow$};
				
				\begin{scope}[shift={(-3.5cm,0)}] 
					\node {
						\begin{tikzpicture}[
							level 1/.style={level distance=8mm, sibling distance=40mm},
							level 2/.style={level distance=8mm, sibling distance=20mm},
							treenode/.style={rectangle, draw=black, rounded corners,
								minimum width=14mm, minimum height=6mm, align=center},
							internal/.style={treenode, fill=blue!10},
							term/.style={treenode, fill=gray!10}]
							\node[internal] {$x_3 < 0.5$}
							child { node[term] {$\mu_L$} }
							child { node[term] {$\mu_R$} };
						\end{tikzpicture}
					};
				\end{scope}
				
				\begin{scope}[shift={(3.5cm,0)}] 
					\node {
						\begin{tikzpicture}[
							level 1/.style={level distance=8mm, sibling distance=40mm},
							level 2/.style={level distance=8mm, sibling distance=20mm},
							treenode/.style={rectangle, draw=black, rounded corners,
								minimum width=14mm, minimum height=6mm, align=center},
							internal/.style={treenode, fill=blue!10},
							term/.style={treenode, fill=gray!10}]
							\node[internal] {$x_3 < 0.8$}
							child { node[term] {$\mu_L$} }
							child { node[term] {$\mu_R$} };
						\end{tikzpicture}
					};
				\end{scope}
			\end{tikzpicture}
			\caption{Change of split threshold from $x_3 < 0.5$ to $x_3 < 0.8$.}
		\end{subfigure}
		
		\begin{subfigure}{\textwidth}
			
			\centering
			\begin{tikzpicture}[every node/.style={font=\footnotesize}]
				\node (arrow) at (0,0) {\Huge$\Longrightarrow$};
				
				\begin{scope}[shift={(-4.5cm,0)}] 
					\node {
						\begin{tikzpicture}[level 1/.style={level distance=8mm, sibling distance=40mm},
							level 2/.style={level distance=8mm, sibling distance=20mm},
							treenode/.style={rectangle, draw=black, rounded corners, minimum width=14mm, minimum height=6mm, align=center},
							internal/.style={treenode, fill=blue!10},
							term/.style={treenode, fill=gray!10}]
							\node[internal] {$x_1 < c_1$}
							child { node[internal] {$x_2 < c_2$}
								child { node[term] {$\mu_1$} }
								child { node[term] {$\mu_2$} }
							}
							child { node[term] {$\mu_3$} };
						\end{tikzpicture}
					};
				\end{scope}
				
				\begin{scope}[shift={(4.5cm,0)}] 
					\node {
						\begin{tikzpicture}[level 1/.style={level distance=8mm, sibling distance=40mm},
							level 2/.style={level distance=8mm, sibling distance=20mm},
							treenode/.style={rectangle, draw=black, rounded corners, minimum width=14mm, minimum height=6mm, align=center},
							internal/.style={treenode, fill=blue!10},
							term/.style={treenode, fill=gray!10}]
							\node[internal] {$x_2 < c_2$}
							child { node[internal] {$x_1 < c_1$}
								child { node[term] {$\mu_1$} }
								child { node[term] {$\mu_2$} }
							}
							child { node[term] {$\mu_3$} };
						\end{tikzpicture}
					};
				\end{scope}
			\end{tikzpicture}
			\caption{Original tree (left) and swapped version (right).}
		\end{subfigure}
		
		\caption{Graphical representation of tree MCMC moves}
		\label{fig:tree:MCMC}
	\end{figure}
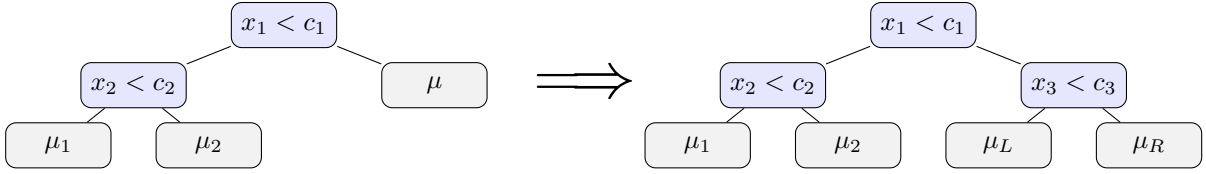
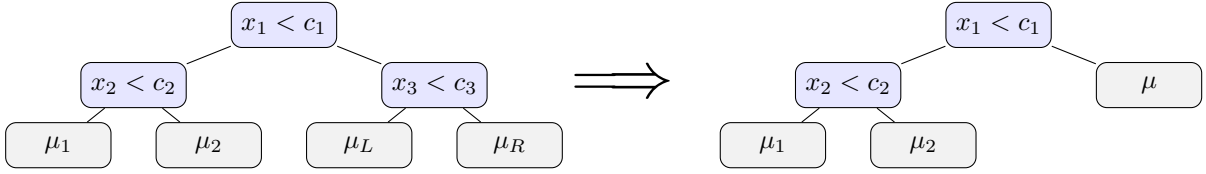
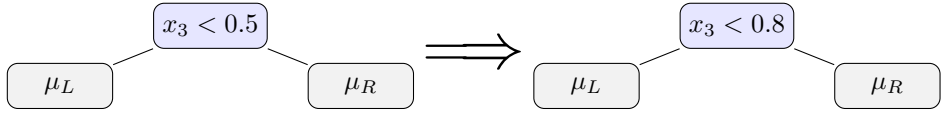
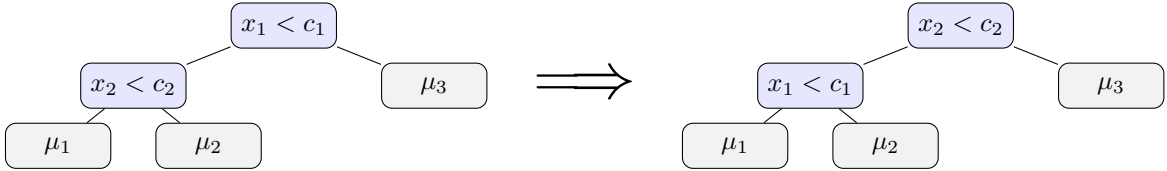

	\subsection{Proposal for each moves}
	
	Here, we explain the proposals of our RJ-MCMC algorithm for different tree moves. We follow the framework of \cite{green_RJMCMC} to define the acceptance probabilities corresponding to each move.

	\subsubsection{\textsc{grow}/\textsc{prune} move:}
	For growing a tree $T^{\ast}_k$ from $T_k$, we first randomly sample a terminal node $j$ from the set of all terminal nodes. Followed by a cut point for the splitting rule of the terminal node. Lastly, we sample terminal node values for the new left and right terminal nodes. Then we have the following:
	\begin{equation*}
		q_{k;\gamma;\textsc{grow}}\left(T_k,M_k;T_k^{\ast}, M_k^{\ast}\right) 
		= \frac{1}{\#\{\mathcal{TN}(T_k)\}} \pi_{\textsc{rule}}(j\mid c)\mathcal{N}\left(\mu_{jr}^{\ast};\mu_{j},\gamma_{jr}\right)\mathcal{N}\left(\mu_{jl}^{\ast};\mu_{j},\gamma_{jl}\right),
	\end{equation*} 
	where $\mathcal{N}(a;b,c^2)$ is the probability of getting $a$ from a normal distribution with mean $b$ and variance $c^2$. Here $\mu_j$ is same for both the proposals as the value at observations remain same after growing. However, the variance contribution changes.
	
	To go back from $T^{\ast}_k$ to $T_k$, we prune the $jl$-th terminal node and $jr$-th terminal node. For that, we calculate the probability of choosing the $j$-th node that becomes an internal node now. Followed by the proposal for terminal node value. Then,
	\begin{equation*}
		q_{k;\gamma;\textsc{prune}}\left(T_k^{\ast}, M_k^{\ast};T_k,M_k\right) 
		= \frac{1}{\#\{\mathcal{PN}(T_k^{\ast})\}} \mathcal{N}\left(\mu_j;\mu^{\ast}_j,\gamma_j\right)
	\end{equation*} 
	where $\mu^{\ast}_j = \frac{ \#\{\mathcal{I}^k_{jr}\}\mu_{jr}^{\ast} +  \#\{\mathcal{I}^k_{jl}\}\mu_{jl}^{\ast}}{ \#\{\mathcal{I}^k_{jr}\}+ \#\{\mathcal{I}^k_{jl}\}}$. Therefore, the proposal ratio of the \textsc{growth} move is given by:
	\begin{equation*}
		\frac{q_{k;\gamma;\textsc{prune}}\left(T_k^{\ast}, M_k^{\ast};T_k,M_k\right)}{q_{k;\gamma;\textsc{grow}}\left(T_k,M_k;T_k^{\ast}, M_k^{\ast}\right) } 
		= \frac{\frac{1}{\#\{\mathcal{PN}(T_k^{\ast})\}} \mathcal{N}\left(\mu_j;\mu_{j}^{\ast},\gamma_j\right)}{\frac{1}{\#\{\mathcal{TN}(T_k)\}} \pi_{\textsc{rule}}(j\mid c)\mathcal{N}\left(\mu_{jr}^{\ast};\mu_{j},\gamma_{jr}\right)\mathcal{N}\left(\mu_{jl}^{\ast};\mu_{j},\gamma_{jl}\right)}.
	\end{equation*}
	Similarly, we get the prior ratio for a \textsc{growth} move. Note that when a tree is grown at the $j$-th terminal node, we need additional splitting rule at that node. Therefore the ratio is given by:
	\begin{equation*}
		\frac{\pi(T_k^{\ast},M_k^{\ast})\pi_{\textsc{rule}}(j\mid c)}{\pi(T_k,M_k)}.
	\end{equation*}
	Finally, combining with the likelihood we get
	\begin{align*}
		&AR_{k;\gamma;\textsc{grow}}\left(T_k,M_k;T_k^{\ast}, M_k^{\ast}\right)\\
		&= \frac{
			\frac{\pi(T_k^{\ast},M_k^{\ast})}{\#\{\mathcal{PN}(T_k^{\ast})\}} 
			\prod_{i\in \mathcal{I}_{jr}^k}c\left(u_{1i},u_{2i}\mid h\left(R_{ik}+\mu^{\ast}_{jr}\right)\right)
			\prod_{i\in \mathcal{I}_{jl}^k}c\left(u_{1i},u_{2i}\mid h\left(R_{ik}+\mu^{\ast}_{jl}\right)\right)
			\mathcal{N}\left(\mu_j;\mu_{j}^{\ast},\gamma_j\right)}
		{\frac{\pi(T_k,M_k)}{\#\{\mathcal{TN}(T_k)\}} 
			\prod_{i\in \mathcal{I}_{j}^k}c\left(u_{1i},u_{2i}\mid h\left(R_{ik}+\mu_j\right)\right) 
			\mathcal{N}\left(\mu_{jr}^{\ast};\mu_{j},\gamma_{jr}\right)\mathcal{N}\left(\mu_{jl}^{\ast};\mu_{j},\gamma_{jl}\right)}.
	\end{align*}
	Since \textsc{prune} is the reverse process of the \textsc{grow} move, we have
	\begin{align*}
		&AR_{k;\gamma;\textsc{prune}}\left(T_k,M_k;T_k^{\ast}, M_k^{\ast}\right)\\
		&= \frac{
			\frac{\pi(T^{\ast}_k,M^{\ast}_k)}{\#\{\mathcal{TN}(T^{\ast}_k)\}} 
			\prod_{i\in \mathcal{I}_{j}^k}c\left(u_{1i},u_{2i}\mid h\left(R_{ik}+\mu^{\ast}_j\right)\right) 
			\mathcal{N}\left(\mu_{jr};\mu^{\ast}_{j},\gamma_{jr}\right)
			\mathcal{N}\left(\mu_{jl};\mu^{\ast}_{j},\gamma_{jl}\right)}
		{\frac{\pi(T_k,M_k)}{\#\{\mathcal{PN}(T_k)\}} 
			\prod_{i\in \mathcal{I}_{jr}^k}c\left(u_{1i},u_{2i}\mid h\left(R_{ik}+\mu_{jr}\right)\right)
			\prod_{i\in \mathcal{I}_{jl}^k}c\left(u_{1i},u_{2i}\mid h\left(R_{ik}+\mu_{jl}\right)\right)
			\mathcal{N}\left(\mu^{\ast}_j;\mu_{j},\gamma_j\right)}
	\end{align*}
	where $\mu_j = \frac{\#\{\mathcal{I}^k_{jr}\}\mu_{jr} + \#\{\mathcal{I}^k_{jl}\}\mu_{jl}}{\#\{\mathcal{I}^k_{jr}\} + \#\{\mathcal{I}^k_{jl}\}}$.
	
	\subsubsection{\textsc{change}/\textsc{swap} move:}
	For the \textsc{change} or the \text{swap} move. The proposals only rely on the internal nodes and the terminal nodes and values remain unchanged. Moreover, for each of these moves $\mathcal{IN}(T_k)$ and $\mathcal{PCN}(T_k)$ remain invariant in nature. That is for a proposed tree $T_k^{\ast}$, $\mathcal{IN}(T_k)=\mathcal{IN}(T_k^{\ast})$ and $\mathcal{PCN}(T_k)=\mathcal{PCN}(T_k^{\ast})$. Therefore, the proposal ratios of these two moves are equal to 1. This gives the following:
	
	\begin{align*}
		AR_{k;\gamma;\textsc{move}}\left(T_k,M_k;T_k^{\ast}, M_k^{\ast}\right)= \frac{
			\pi(T^{\ast}_k,M^{\ast}_k) 
			\prod_{i=1}^{n}c\left(u_{1i},u_{2i}\mid h\left(R_{ik}+g(x_i, T_k^{\ast}, M_k^{\ast})\right)\right)}
		{\pi(T_k,M_k) 
			\prod_{i=1}^{n}c\left(u_{1i},u_{2i}\mid h\left(R_{ik}+g(x_i, T_k, M_k)\right)\right)}
	\end{align*}
	for $\textsc{move}\in\{\textsc{change},\textsc{swap}\}$.
	
	Once we have the acceptance ratio, we can calculate the acceptance probability $\alpha_{k;\gamma;\textsc{move}}\left(T_k,M_k;T_k^{\ast}, M_k^{\ast}\right)$ of moving from $(T_k,M_k)$ to $(T_k^{\ast},M_k^{\ast})$ in the following way:
	\begin{equation*}
		\alpha_{k;\gamma;\textsc{move}}\left(T_k,M_k;T_k^{\ast}, M_k^{\ast}\right) = \min\left\{1,AR_{k;\gamma;\textsc{move}}\left(T_k,M_k;T_k^{\ast}, M_k^{\ast}\right)\right\},
	\end{equation*}
	where $\textsc{move}\in \left\{\textsc{grow}, \textsc{prune}, \textsc{change}, \textsc{swap}\right\}$.

	\subsection{Markov transition kernel of the RJ-MCMC step}
	
	Let $\mathcal{S}\coloneqq\mathcal{T}\times \mathcal{M}$ be the state space and $S_k\coloneqq(T_k,M_k) \in \mathcal{S}$ be the current state of the $k$-th tree. Let $p_{\textsc{g}}$, $p_{\textsc{p}}$, $p_{\textsc{c}}$ and $p_{\textsc{s}}$ be the probabilities of selecting \textsc{grow}, \textsc{prune}, \textsc{change} and \textsc{swap} move respectively. Then dropping the tree index $k$ for convenience, we can write the transition kernel of the RJ-MCMC algorithm with $\gamma$ adaption in the following way:
	\begin{align*}
		\mathcal{K}_{\gamma}\left(s; \cdot\right) 
		= p_{\textsc{g}}\mathcal{K}_{\gamma;\textsc{grow}}\left(s; \cdot\right) + 
		p_{\textsc{p}}\mathcal{K}_{\gamma;\textsc{prune}}\left(s; \cdot\right) + 
		p_{\textsc{c}}\mathcal{K}_{\gamma;\textsc{change}}\left(s; \cdot\right)+ 
		p_{\textsc{s}}\mathcal{K}_{\gamma;\textsc{swap}}\left(s; \cdot\right).
	\end{align*}
	Here, for the \textsc{change} and \textsc{swap} moves, we do not need to propose any new terminal node values hence $\gamma$ adaption is not utilised and the notation is used for the sake of consistency. Now, let $s^{\ast}$ be a proposed state, then each of these kernels with $\gamma$ adaption are given by
	\begin{align*}
		\mathcal{K}_{\gamma;\textsc{move}}\left(s, \cdot\right) 
		&= \alpha_{\gamma;\textsc{move}}\left(s;s^{\ast}\right) 
		q_{\gamma;\textsc{move}}\left(s;s^{\ast}\right)ds^{\ast} \\
		&\qquad+ \left(1 - \int_{\mathcal{T}\times\mathcal{M}} \alpha_{\gamma;\textsc{move}}\left(s;s^{\ast}\right) 
		q_{\gamma;\textsc{move}}\left(s;s^{\ast}\right)ds^{\ast}\right) \delta_{s}(ds^{\ast}),
	\end{align*}
	where $\textsc{move}\in \left\{\textsc{grow}, \textsc{prune}, \textsc{change}, \textsc{swap}\right\}$ and $\delta_x(dy)$ is the Dirac measure at $x$. 
	
	\subsubsection{Adaptive proposal variance} 
	
	For the sake of notational consistency, we restate the algorithm for computing the adaptive proposal variance in \cref{alg:ada:prop} without the tree index.
	
	\begin{algorithm}[h]
		\caption{Computation of adaptive proposal}\label{alg:ada:prop}
		\begin{algorithmic}[1]
			\State Perform $\eta_0$ iterations with a fixed variance.
			
			\State Collect MCMC samples for initial $\eta_0$ iterations:
			\begin{equation*}
				V_{i}^{\eta} = g\left(x_i,T^{\eta},M^{\eta}\right)\quad 1\le i \le n; 1\le \eta \le \eta_0.
			\end{equation*}
			
			\State Calculate sample covariance matrix :
			\begin{equation*}
				C^{\eta} \coloneqq Cov\left(V^{1},V^{2},\cdots,V^{\eta}\right) + \epsilon\mathbf{I}_n \quad \text{for } \epsilon>0; \eta>\eta_0.
			\end{equation*}
			
			\State Calculate the proposal variance of the $j$-th terminal node value:
			\begin{equation*}
				\sigma_{\text{prop};j}^2 \coloneqq \frac{2.4^2}{\left(\#\left\{\mathcal{I}_{j}\right\}\right)^3}
				\sum_{c\in\mathcal{I}_{j}}\sum_{d\in\mathcal{I}_{j}}\left[C^{\eta}\right]_{cd};\qquad\mathcal{I}_{j}\coloneqq \left\{i:x_i\in \Omega_{j}\right\};\qquad 1\le j\le n_L(T).
			\end{equation*}
			
			\State Update sample covariance using 
			\begin{equation*}
				C^{\eta+1} \coloneqq 
				\frac{\eta-1}{\eta} C^{\eta} 
				+ \frac{1}{\eta}\left(\eta 
				\left(\overline{V}^{\eta}\right)\left(\overline{V}^{\eta}\right)^T 
				- (\eta+1)\left(\overline{V}^{\eta+1}\right)\left(\overline{V}^{\eta+1}\right)^T 
				+ \left({V}^{\eta+1}\right)\left({V}^{\eta+1}\right)^T + \epsilon\mathbf{I}_n\right).
			\end{equation*}
		\end{algorithmic}
	\end{algorithm}

	\subsection{Proof of Ergodicity}
	
	Before providing the proof of ergodicity. We restate the required assumptions and prove two lemmas that are necessary for showing ergodicity.
	
	\begin{assumption}\label{ass:bounded:cop}
		$\exists$ \(\delta>0\) such that the copula density is bounded. That is, $0 < c(u_1,u_2\mid\theta(h(g(x,T,M)))) < \infty$ for all $(u_1,u_2)\in[\delta,1-\delta]^2$ and for all $M \in \mathcal{M}$
	\end{assumption}
	\begin{assumption}\label{ass:bounded:prior}
		All prior densities for the terminal node values are bounded above.
	\end{assumption}
	\begin{assumption}\label{ass:bounded:prop}
		All proposal densities of the terminal node values are uniformly bounded.
	\end{assumption}
	\begin{assumption}\label{ass:bounded:tree}
		With $n$ observations, the number of terminal nodes $n_L(T)\le n$ and for any tree with $n_L(T)\ge 2$, there exists at least one internal node that can be pruned.
	\end{assumption}

	Now, let $S\coloneqq \left(T,M\right)$ be a state in the state space $\mathcal{S}\coloneqq\mathcal{T}\times \mathcal{M}$. Let $\Gamma^{\eta} \coloneqq \left(\Gamma_{1}^{\eta}, \cdots\Gamma^{\eta}_{n_L(T)}\right)\in\mathcal{Y}$ be the adaption at the $\eta$-th iteration such that for $j \in \{1,2,\cdots, n_L(T)\}$
	\begin{equation*}
		\Gamma^{\eta}_{j} \coloneqq \frac{2.4^2}{\left(\#\left\{\mathcal{I}_{j}\right\}\right)^3}
		\sum_{c\in\mathcal{I}_{j}}\sum_{d\in\mathcal{I}_{j}}\left[C^{\eta}\right]_{cd}.
	\end{equation*}
	Then the following two lemmas are true.
	
	\begin{lemma}\label{lemm:bounded:ratio}
		If \cref{ass:bounded:cop}-\cref{ass:bounded:prop} are satisfied then the acceptance ratios are uniformly bounded. That is, there exist values $0< b_{\textsc{move}} <  B_{\textsc{move}} < \infty$ such that
		\begin{equation*}
			\frac{1}{B_{\textsc{move}}} \le AR_{\gamma, \textsc{move}}\left(s;s^{\ast}\right) \le \frac{1}{b_{\textsc{move}}}
		\end{equation*}
		$\forall\gamma \in \mathcal{Y}$.
	\end{lemma}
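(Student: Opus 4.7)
The plan is to decompose each acceptance ratio $AR_{\gamma,\textsc{move}}(s; s^{\ast})$ into a likelihood factor, a combinatorial-plus-prior factor, and a proposal factor, and then bound each piece using the corresponding assumption. Since \textsc{prune} is the reverse of \textsc{grow} and yields the reciprocal ratio, it suffices to prove uniform boundedness for \textsc{grow}, \textsc{change}, and \textsc{swap}.

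First I would dispatch the \textsc{change} and \textsc{swap} moves: here the proposal ratio is identically $1$, $\gamma$ does not enter, and the trees share the sets $\mathcal{IN}$ and $\mathcal{PCN}$. The acceptance ratio reduces to a likelihood ratio and a prior ratio over matching dimensions. Applying \cref{ass:bounded:cop} to each of the $n$ copula density factors gives positive two-sided bounds $[c_{\min}, c_{\max}]$ that are independent of $M$, so the likelihood ratio lies in $[(c_{\min}/c_{\max})^n, (c_{\max}/c_{\min})^n]$. The tree topology prior ratio $\exp(\omega[n_L(T^{\ast}) - n_L(T)] - \zeta[\Delta(T^{\ast}) - \Delta(T)])$ is bounded because \cref{ass:bounded:tree} forces $n_L(T), |\Delta(T)| \le n$; the terminal-node-value prior densities are bounded above by \cref{ass:bounded:prior} and strictly positive at the specific finite values of $\mu$ dictated by $s$ and $s^{\ast}$, so their ratio is bounded on both sides. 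This settles the \textsc{change} and \textsc{swap} cases.

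For the \textsc{grow} move the likelihood and prior factors are handled identically, while the combinatorial prefactor $\#\{\mathcal{TN}(T_k)\}/\#\{\mathcal{PN}(T_k^{\ast})\}$ is a ratio of positive integers each bounded above by $n$ via \cref{ass:bounded:tree} and bounded below by $1$ (the freshly grown pair supplies a prunable node in $T_k^{\ast}$). The crux is the proposal ratio $\mathcal{N}(\mu_j;\mu_j^{\ast},\gamma_j)/[\mathcal{N}(\mu_{jr}^{\ast};\mu_j,\gamma_{jr})\mathcal{N}(\mu_{jl}^{\ast};\mu_j,\gamma_{jl})]$. Upper bounds on each Gaussian factor follow from \cref{ass:bounded:prop}. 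For the lower bounds, I would argue that by the construction of \cref{alg:ada:prop} the $\epsilon\mathbf{I}_n$ stabilisation forces $\gamma_j \ge \epsilon (2.4)^2/(\#\{\mathcal{I}_j\})^2$, while the empirical covariance is uniformly bounded above as in \citet{haario_AMH}; hence each $\gamma$ lives in a compact set $[\gamma_{\min}, \gamma_{\max}] \subset (0,\infty)$. On this compact set, and with $\mu_j, \mu_{jl}^{\ast}, \mu_{jr}^{\ast}$ fixed by $s$ and $s^{\ast}$, each Gaussian density is a continuous strictly positive function of $\gamma$ and so admits a uniform positive lower bound. The main obstacle will be precisely this uniform positivity of the Gaussian proposal densities in $\gamma$: without the $\epsilon$-floor on $C^{\eta}$ a Gaussian has no pointwise positive lower bound, so the argument must lean on the adaptive variance staying in a compact interval. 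Combining the four bounded factors yields constants $0 < b_{\textsc{grow}} < B_{\textsc{grow}} < \infty$ independent of $\gamma$, and inverting the same bound delivers $b_{\textsc{prune}}, B_{\textsc{prune}}$.
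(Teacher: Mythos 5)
Your proposal is correct and follows essentially the same route as the paper's (much terser) argument: bound the copula likelihood on $[\delta,1-\delta]^2$ via \cref{ass:bounded:cop}, bound the prior and proposal factors via \cref{ass:bounded:prior}--\cref{ass:bounded:prop} together with the discreteness of the tree space, and obtain independence of $\gamma$ from the uniform boundedness of the adaptive proposal variances. Your move-by-move decomposition and the explicit variance floor $\gamma_j \ge 2.4^2\epsilon/(\#\{\mathcal{I}_j\})^2$ coming from the $\epsilon\mathbf{I}_n$ regularisation simply make explicit what the paper leaves implicit in its appeal to \citet{haario_AMH}.
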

	
	\begin{proof}
		In a compact support of $[\delta,1-\delta]^2$, the bivariate copula density functions presented in this article are bounded. Besides this, the prior and proposal for a tree is from a discrete space and the priors and proposals for the terminal node values have finite variances. Therefore, the posterior is bounded for all possible terminal node values in $\mathcal{M}$. Moreover, since the proposal variances are uniformly bounded \citep{haario_AMH}, the bounds of the acceptance ratios are independent of the $\gamma$ adaption.
	\end{proof}
	
	This lemma ensures that we do not have degenerate values for the acceptance rate which is otherwise problematic for the convergence of MCMC algorithm.
	
	\begin{lemma}\label{lemm:stable:var}
		$\left\|\Gamma^{\eta+1}_{j} - \Gamma^{\eta}_{j}\right\|_{\infty}\to 0$ as $\eta\to\infty$.
	\end{lemma}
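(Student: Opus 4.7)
The plan is to exploit the $1/\eta$ weights in the recursive covariance update from \cref{alg:ada:prop} and transfer that rate to the linear functional defining $\Gamma^{\eta}_{j}$. Writing the update as
\begin{equation*}
    C^{\eta+1}-C^{\eta}=-\frac{1}{\eta}C^{\eta}+\frac{1}{\eta}\Bigl(\eta\,\overline{V}^{\eta}(\overline{V}^{\eta})^{T}-(\eta+1)\overline{V}^{\eta+1}(\overline{V}^{\eta+1})^{T}+V^{\eta+1}(V^{\eta+1})^{T}+\epsilon\mathbf{I}_{n}\Bigr),
\end{equation*}
the key observation is that every term on the right carries the prefactor $1/\eta$, so the proof reduces to controlling the $\max$-norm of each bracketed quantity uniformly in $\eta$.

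Next I would argue that the entries of $C^{\eta}$ and of the outer products $V^{\eta}(V^{\eta})^{T}$ and $\overline{V}^{\eta}(\overline{V}^{\eta})^{T}$ remain uniformly bounded. As invoked in the discussion following the assumptions, $C^{\eta}$ is uniformly bounded by construction in the spirit of \citet{haario_AMH}, and \cref{ass:bounded:prop} guarantees that the Gaussian proposal variances are uniformly bounded; combined with a standard truncation-within-a-compact-set device this yields a deterministic constant $B$ such that $\|C^{\eta+1}-C^{\eta}\|_{\max}\le B/\eta$ for every $\eta$. The conclusion then follows from the linearity of the map $C^{\eta}\mapsto \Gamma^{\eta}_{j}$: using \cref{ass:bounded:tree} to ensure $\#\mathcal{I}_{j}\ge 1$, we obtain
\begin{equation*}
    \bigl|\Gamma^{\eta+1}_{j}-\Gamma^{\eta}_{j}\bigr|\le \frac{2.4^{2}}{(\#\mathcal{I}_{j})^{3}}\sum_{c,d\in\mathcal{I}_{j}}\bigl|\bigl[C^{\eta+1}-C^{\eta}\bigr]_{cd}\bigr|\le \frac{2.4^{2}\,B}{\#\mathcal{I}_{j}\,\eta}\le \frac{2.4^{2}\,B}{\eta},
\end{equation*}
which tends to $0$ as $\eta\to\infty$. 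Taking the supremum over the finitely many leaves of the current tree delivers $\|\Gamma^{\eta+1}-\Gamma^{\eta}\|_{\infty}\to 0$ at rate $1/\eta$.

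The main obstacle is rigorously justifying the uniform boundedness of the outer products $V^{\eta}(V^{\eta})^{T}$: since terminal-node proposals are Gaussian with unbounded support, strict pathwise boundedness is not automatic. I would address this either by truncating the terminal-node state space $\mathcal{M}$ to a compact subset (a device used in \citet{haario_AMH}) or by an almost-sure Borel--Cantelli argument based on the polynomial-tail control afforded by the uniformly bounded proposal variances in \cref{ass:bounded:prop}.
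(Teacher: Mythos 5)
Your strategy coincides with the paper's: isolate the $1/\eta$ factor in the recursive update of $C^{\eta}$ from \cref{alg:ada:prop} and push the resulting rate through the linear map $C\mapsto\Gamma_{j}$, which is exactly how the paper argues via its increment $\Delta C^{\eta+1}$. The gap is in your key estimate $\left\|C^{\eta+1}-C^{\eta}\right\|_{\max}\le B/\eta$. The quantities inside the bracket are not the plain outer products you bound but $\eta\,\overline{V}^{\eta}(\overline{V}^{\eta})^{T}$ and $(\eta+1)\,\overline{V}^{\eta+1}(\overline{V}^{\eta+1})^{T}$, which grow linearly in $\eta$ even when the values at observations are uniformly bounded. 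Hence uniform bounds on $C^{\eta}$, $V^{\eta+1}(V^{\eta+1})^{T}$ and $\overline{V}^{\eta}(\overline{V}^{\eta})^{T}$ only give the bracket a bound of order $\eta$, so after the $1/\eta$ prefactor you obtain $\left\|C^{\eta+1}-C^{\eta}\right\|_{\max}=O(1)$, not $O(1/\eta)$, and the conclusion does not follow from what you establish. What rescues the rate is the cancellation between the two $\eta$-weighted terms, which is precisely what the paper invokes when it appeals to the definitions of the sample means: using $\overline{V}^{\eta+1}=\frac{\eta\overline{V}^{\eta}+V^{\eta+1}}{\eta+1}$ one gets
\begin{equation*}
\eta\,\overline{V}^{\eta}\left(\overline{V}^{\eta}\right)^{T}-(\eta+1)\,\overline{V}^{\eta+1}\left(\overline{V}^{\eta+1}\right)^{T}
=\frac{\eta}{\eta+1}\,\overline{V}^{\eta}\left(\overline{V}^{\eta}\right)^{T}
-\frac{\eta}{\eta+1}\left(\overline{V}^{\eta}\left(V^{\eta+1}\right)^{T}+V^{\eta+1}\left(\overline{V}^{\eta}\right)^{T}\right)
-\frac{1}{\eta+1}\,V^{\eta+1}\left(V^{\eta+1}\right)^{T},
\end{equation*}
which is uniformly bounded whenever the $V$'s are; only then is the full bracket $O(1)$ and your $B/\eta$ bound valid. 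Your write-up never performs this step, so as stated the central inequality is unjustified, even though the fix is a one-line identity.

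On the secondary point you are, if anything, more careful than the paper: the paper tacitly treats the $V_i^{\eta}$ as bounded, whereas you flag that Gaussian terminal-node proposals have unbounded support and propose truncation of $\mathcal{M}$ or a Borel--Cantelli argument. That concern is legitimate and either remedy works; note also that boundedness is convenient rather than essential, since Gaussian-dominated tails (bounded copula density and prior, \cref{ass:bounded:cop}--\cref{ass:bounded:prop}) give $\max_{i\le\eta}\left\|V^{i}\right\|_{\infty}=O\!\left(\sqrt{\log\eta}\right)$ almost surely, so the increment is still $O\!\left(\log\eta/\eta\right)\to 0$. Finally, like the paper, your comparison of $\Gamma^{\eta+1}_{j}$ with $\Gamma^{\eta}_{j}$ implicitly keeps the partition $\mathcal{I}_{j}$ fixed; that is the correct reading here because the diminishing-adaptation condition compares kernels at the same state $s$, but it is worth saying explicitly.
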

	
	\begin{proof}
		Recall the iterative variance formula given by:
		\begin{equation*}
			C^{\eta+1} \coloneqq 
			\frac{\eta-1}{\eta} C^{\eta} 
			+ \frac{1}{\eta}\left(\eta 
			\left(\overline{V}^{\eta}\right)\left(\overline{V}^{\eta}\right)^T 
			- (\eta+1)\left(\overline{V}^{\eta+1}\right)\left(\overline{V}^{\eta+1}\right)^T 
			+ \left({V}^{\eta+1}\right)\left({V}^{\eta+1}\right)^T + \epsilon\mathbf{I}_n\right),
		\end{equation*}
		where $\overline{V}^{\eta}$ is the point-wise mean of $\eta$ MCMC samples. That is, for $n$ number of observations $\overline{V}^{\eta} \coloneqq \left(\frac{1}{\eta}\sum_{i=1}^\eta V_1^i,\cdots, \frac{1}{\eta}\sum_{i=1}^\eta V_n^i\right)^T$. Using this iterative formula, we define the change in variance in two consecutive iterations in the following way:
		\begin{equation*}
			\Delta C^{\eta + 1} \coloneqq \frac{1}{\eta}\left(\eta \left(\overline{V}_{\cdot }^{\eta}\right)\left(\overline{V}_{\cdot }^{\eta}\right)^T 
			- (\eta+1)\left(\overline{V}_{\cdot }^{\eta+1}\right)\left(\overline{V}_{\cdot }^{\eta+1}\right)^T 
			+ \left({V}_{\cdot }^{\eta+1}\right)\left({V}_{\cdot }^{\eta+1}\right)^T + \epsilon\mathbf{I}_n - C^{\eta}\right).
		\end{equation*}
		Now, let $ n_j = \#\left\{\mathcal{I}_{j}\right\}$ denote the number of observations contained in the $j$-th partition. Then
		\begin{align*}
			\Gamma^{\eta+1}_j 
			= \frac{2.4^2}{\left(n_{j}\right)^3}
			\sum_{c\in\mathcal{I}_{j}}\sum_{d\in\mathcal{I}_{j}}\left[C^{\eta+1}\right]_{cd}
			&= \frac{2.4^2}{\left(n_{j}\right)^3}
			\left[\sum_{c\in\mathcal{I}_{j}}\sum_{d\in\mathcal{I}_{j}}\left[C^{\eta} + \Delta C^{\eta + 1}\right]_{cd}\right] \\
			&= \Gamma^{\eta}_j + \frac{2.4^2}{\left(n_{j}\right)^3} \left[\sum_{c\in\mathcal{I}_{j}}\sum_{d\in\mathcal{I}_{j}}\left[\Delta C^{\eta + 1}\right]_{cd}\right].
		\end{align*}
		Since $\overline{V}^{\eta} \coloneqq \left(\frac{1}{\eta}\sum_{i=1}^\eta V_1^i,\cdots, \frac{1}{\eta}\sum_{i=1}^\eta V_n^i\right)^T$ and $\overline{V}^{\eta+1} \coloneqq \left(\frac{1}{\eta+1}\sum_{i=1}^{\eta+1} V_1^i,\cdots, \frac{1}{\eta+1}\sum_{i=1}^{\eta+1} V_n^i\right)^T$, the increment is of $O(1/\eta)$ in the $(\eta +1)$-th iteration. Therefore $\left\|\Gamma^{\eta+1}_j - \Gamma^{\eta}_j\right\|_{\infty} \to 0$ as $\eta\to \infty$.
	\end{proof}
	This way, we ensure that variance in each iterative steps does not change rapidly and eventually the change becomes infinitesimally small.
	
	\begin{manualtheorem}{4.1}
		If \cref{ass:bounded:cop}-\cref{ass:bounded:tree} are satisfied and $\gamma$ adaption strategy is defined by \cref{alg:ada:prop}. Then the following two conditions hold:
		\begin{enumerate}
			\item Diminishing adaptation $\rightarrow$ $\left\|\mathcal{K}_{\Gamma^{\eta+1}}(s;\cdot)-\mathcal{K}_{\Gamma^{\eta}}(s;\cdot)\right\|_{TV}\to 0$ in probability as $\eta\to \infty$.
			\item Containment $\rightarrow$ $\forall\iota>0$, $\exists N = N(\iota)\in \mathbb{N}$ such that 
			\begin{equation*}
				\left\|\mathcal{K}^N_{\gamma}(s;\cdot) - \pi(\cdot\mid u_1,u_2)\right\|_{TV} \le \iota
			\end{equation*}
		\end{enumerate}
		where $\pi(\cdot\mid u_1, u_2)$ denote the target distribution defined on $\mathcal{S}$.
	\end{manualtheorem}
	
	\begin{proof}
		\textbf{Diminishing adaptation:}
		To show diminishing adaptation, we check the total variation norm for each move. Let the \textsc{grow} move is denoted by \textsc{g}. Then the kernel corresponding \textsc{grow} gives us
		\begin{align*}
			&\left\|\mathcal{K}_{\Gamma^{\eta+1};\textsc{g}}(s;\cdot)-\mathcal{K}_{\Gamma^{\eta};\textsc{g}}(s;\cdot)\right\|_{TV}\\
			&\le 2\int \left|\alpha_{\Gamma^{\eta+1};\textsc{g}}(s;s^{\ast})q_{\Gamma^{\eta+1};\textsc{g}}(s;s^{\ast}) - \alpha_{\Gamma^{\eta};\textsc{g}}(s;s^{\ast})q_{\Gamma^{\eta};\textsc{g}}(s;s^{\ast})\right|ds^{\ast}\\
			&\le 2\int \left|\alpha_{\Gamma^{\eta+1};\textsc{g}}(s;s^{\ast})\left(q_{\Gamma^{\eta+1};\textsc{g}}(s;s^{\ast}) - q_{\Gamma^{\eta};\textsc{g}}(s;s^{\ast})\right)\right|ds^{\ast} 
			+ 2\int \left|\left(\alpha_{\Gamma^{\eta+1};\textsc{g}}(s;s^{\ast})-\alpha_{\Gamma^{\eta};\textsc{g}}(s;s^{\ast})\right)q_{\Gamma^{\eta};\textsc{g}}(s;s^{\ast})\right|ds^{\ast}\\
			&\le 2\int \left|q_{\Gamma^{\eta+1};\textsc{g}}(s;s^{\ast}) - q_{\Gamma^{\eta};\textsc{g}}(s;s^{\ast})\right|ds^{\ast} 
			+ 2\int \left|\left(\alpha_{\Gamma^{\eta+1};\textsc{g}}(s;s^{\ast})-\alpha_{\Gamma^{\eta};\textsc{g}}(s;s^{\ast})\right)q_{\Gamma^{\eta};\textsc{g}}(s;s^{\ast})\right|ds^{\ast}.
		\end{align*}
		
		Since $q_{\Gamma^{\eta+1};\textsc{g}}(s;s^{\ast})$ is a product of Gaussian probabilities with bounded variances and the adaption $\Gamma^{\eta+1}$ converges to $\Gamma^{\eta}$ by \cref{lemm:stable:var}, we can show that $\left\|q_{\Gamma^{\eta+1};\textsc{g}}(s;s^{\ast})-q_{\Gamma^{\eta};\textsc{g}}(s;s^{\ast})\right\|_{TV}$ goes to 0 as $\eta \to \infty$. Similarly, from \cref{lemm:bounded:ratio}, we know that the acceptance ratio and the proposal are uniformly bounded and the sequence $\alpha_{\gamma;\textsc{g}}(s;s^{\ast})$ is trivially dominated by 1. Hence exploiting the pointwise convergence of $\Gamma^{\eta}$ and continuity of $\alpha_{\Gamma^{\eta+1}}(s;s^{\ast})$, we can apply dominated convergence theorem and show that $\left\|\left(\alpha_{\Gamma^{\eta+1};\textsc{g}}(s;s^{\ast})-\alpha_{\Gamma^{\eta};\textsc{g}}(s;s^{\ast})\right)q_{\Gamma^{\eta};\textsc{g}}(s;s^{\ast})\right\|_{TV}$ goes to 0 as $\eta \to \infty$. Hence $\left\|\mathcal{K}_{\Gamma^{\eta+1};\textsc{g}}(s;\cdot)-\mathcal{K}_{\Gamma^{\eta};\textsc{g}}(s;\cdot)\right\|_{TV}$ goes to 0 as $\eta\to \infty$.
		
		With a similar argument for the \textsc{prune} move, we have $\left\|\mathcal{K}_{\Gamma^{\eta+1};\textsc{p}}(s;\cdot)-\mathcal{K}_{\Gamma^{\eta};\textsc{p}}(s;\cdot)\right\|_{TV}$ goes to 0 as $\eta\to \infty$. For \textsc{change} and \textsc{swap} moves, no adaption takes place. Therefore, we can show that the diminishing adaption holds for our proposed adaptive RJ-MCMC algorithm.
		
		\textbf{Containment:}
		To show the containment condition, we follow a similar approach to that of \citet{andrieu_RJ_MCMC}. Recall the state space denoted by $\mathcal{S}\coloneqq\mathcal{T}\times\mathcal{M}$. Then by \cref{ass:bounded:tree}, we can express this space as a finite union of subspaces such that $\mathcal{S} \coloneq \bigcup_{i=1}^n \mathcal{S}_i$, where $\mathcal{S}_i = \mathcal{T}_i \times \mathcal{M}_i$ is the space of all trees with $i$ terminal nodes and the corresponding vector of terminal nodes of length $i$ and $\mathcal{M}_i\subseteq\mathbb{R}^i$. 
		
		We will show that the Markov transition kernel $\mathcal{K}_{\gamma}(\cdot;\cdot)$ satisfies the minorisation condition \citep{meyn_tweedie1994} on the state space $\mathcal{S}$. That is the state space is a small set. First, we show that from any tree with $i$ terminal nodes, we can explore other trees with non zero probability.
		
		Let $s_i \coloneqq\left(T^{(i)}, M^{(i)}\right)$ denote a pair of tree with $i$ terminal nodes and corresponding vector of terminal node values. Then for each $i=2,\cdots,n$, and $B_{i-1}\subseteq \mathcal{S}_{i-1}$
		\begin{align*}
			\mathcal{K}_{\gamma}\left(s_i;B_{i-1}\right)
			&\ge p_{\textsc{p}}\int_{B_{i-1}}\mathcal{K}_{\gamma,\textsc{prune}}\left(s_i;ds_{i-1}\right)\\
			&\ge p_{\textsc{p}}\int_{B_{i-1}}\alpha_{\gamma;\textsc{prune}}\left(s_i;s_{i-1}\right)q_{\gamma;\textsc{prune}}\left(s_i;s_{i-1}\right)ds_{i-1}\\
			&\ge \frac{p_{\textsc{p}}}{B_{\textsc{prune}}}\int_{B_{i-1}}q_{\gamma;\textsc{prune}}\left(s_i;s_{i-1}\right)ds_{i-1}.
		\end{align*}
		Therefore there exist a uniform lower bound $\nu>0$ independent of $\gamma$ such that for all $i=2,\cdots,n$ and $\gamma\in \mathcal{Y}$ the following holds
		\begin{equation}\label{eq:reach:trivial}
			\mathcal{K}_{\gamma}\left(s_i;B_{i-1}\right)
			\ge \nu\int_{B_{i-1}}q_{\gamma;\textsc{prune}}\left(s_i;s_{i-1}\right)ds_{i-1}.
		\end{equation}
		
		Now, let $s_1 = \left(T^{(1)},M^{(1)}\right)$ denote the trivial tree. That is, a tree with only one node. Then for all $i=2,\cdots, n$, we can reach the trivial tree with non-zero probability. Hence, we can define a measure $\phi$ such that
		\begin{equation*}
			\phi(A) = \int_{\mathcal{M}_1}\mathbb{I}\left(s_1\in A\right)\mathcal{N}\left(\mu;0,\sigma_1^2\right)d\mu
		\end{equation*}
		where $\sigma^2_1$ is the prior variance for terminal node value of the trivial tree. We will use this measure to satisfy the minorisation condition. Moreover, we can use this measure to show that $\mathcal{K}_{\gamma}(\cdot;\cdot)$ is $\phi$-irreducible. 
		
		Aperiodicity of the $\mathcal{K}_{\gamma}(\cdot;\cdot)$ is also achieved through construction. For instance, we can propose $s_2$ for which $\alpha_{\gamma;\textsc{grow}}\left(s_1; s_1\right) < 1$, independent of $\gamma$. Then the rejection probability $r_{\gamma;\textsc{grow}}\left(s_1\right)$ is given by:
		\begin{align*}
			r_{\gamma;\textsc{grow}}\left(s_1\right)
			&=1-\int \alpha_{\gamma;\textsc{grow}}\left(s_1; s_2\right)q_{\gamma;\textsc{grow}}\left(s_1; s_2\right) ds_2\\
			&=\int\left(1 - \alpha_{\gamma;\textsc{grow}}\left(s_1; s_2\right)\right) q_{\gamma;\textsc{grow}}\left(s_1; s_2\right) ds_2 > 0.
		\end{align*}
		Therefore, $\forall\gamma\in\mathcal{Y}$, there exists $1>\nu_0 >0$ such that
		\begin{equation}\label{eq:stay:trivial}
			\mathcal{K}_{\gamma}\left(s_1;\left\{s_1\right\}\right) \ge \nu_0 = \nu_0 \phi\left(\left\{s_1\right\}\right).
		\end{equation}
		
		Now, to show that minorisation holds on $\mathcal{S}$, we iterate the transition kernel $n$ times and using Chapman Kolmogorov equation we get
		\begin{align*}
			\mathcal{K}_{\gamma}^n\left(s_k;s_1\right)
			= \int_{\mathcal{S}}\mathcal{K}_{\gamma}^{k-1}\left(s_k;s\right)\mathcal{K}_{\gamma}^{n+1-k}\left(s;s_1\right)ds
			& \ge \int_{\mathcal{S}_1}\mathcal{K}_{\gamma}^{k-1}\left(s_k;s\right)\mathcal{K}_{\gamma}^{n+1-k}\left(s;s_1\right)ds\\
			& = \mathcal{K}_{\gamma}^{k-1}\left(s_k;s_1\right)\mathcal{K}_{\gamma}^{n+1-k}\left(s_1;\left\{s_1\right\}\right).
		\end{align*}
		Therefore, from \cref{eq:reach:trivial} and \cref{eq:stay:trivial}, we get 
		\begin{equation*}
			\mathcal{K}_{\gamma}^n\left(s_k;s_1\right) \ge \nu^{k-1} \nu_0^{n+1-k}\phi\left(s_1\right).
		\end{equation*}
		This holds true for $k=1$ as well. Therefore, for any $s_{k'}$ in $\mathcal{S}$,  
		\begin{equation*}
			\mathcal{K}_{\gamma}^n\left(s_k;s_{k'}\right) \ge \vartheta\phi\left(s_{k'}\right)
		\end{equation*}
		where $\vartheta = \min_{1\le k\le n}\left\{\nu^{k-1} \nu_0^{n+1-k}\right\}$. This shows that the state space $\mathcal{S}$ is small \citep{meyn_tweedie1994}. Then following the proposition given by \citet{Tierney1994}, we can find a constant $L_0>0$ such that for all $N$
		\begin{equation}
			\left\|\mathcal{K}^N_{\gamma}(s;\cdot) - \pi(\cdot\mid y)\right\|_{TV} \le L_0 (1-\vartheta)^{\lfloor N/n\rfloor}.
		\end{equation}
		Therefore, $K_{\gamma}(s;)$ is uniformly ergodic for all values of $\gamma$ and $s$. Hence, for all $\iota>0$, we can find $N = n\left(\frac{\log(\iota)-\log(L_0)}{\log(1-\vartheta)} +1\right)$ such that
		\begin{equation}
			\left\|\mathcal{K}^N_{\gamma}(s;\cdot) - \pi(\cdot\mid y)\right\|_{TV} \le \iota \qquad\forall s\in \mathcal{S}, \gamma\in\mathcal{Y}.
		\end{equation}
		
	\end{proof}
	
	\section{Synthetic dataset}\label{app:results}
	
	\subsection{Tree structure datasets}
	We present the traceplots of the number of terminal nodes in \cref{fig:trace:nterm:ex1} and the depth in \cref{fig:trace:depth:ex1} for the first replicate corresponding to each copula. We notice that the tree exploration stabilises after about 1000 iterations with our proposed approaches. We also present the likelihood plot after discarding 1500 burn in samples in \cref{fig:trace:like:ex1}. We see that our method is able to converge towards the true likelihood region for all cases, shown by the black dashed line. 
	
	\begin{figure}
		\centering
		\includegraphics[width = 0.8\linewidth]{"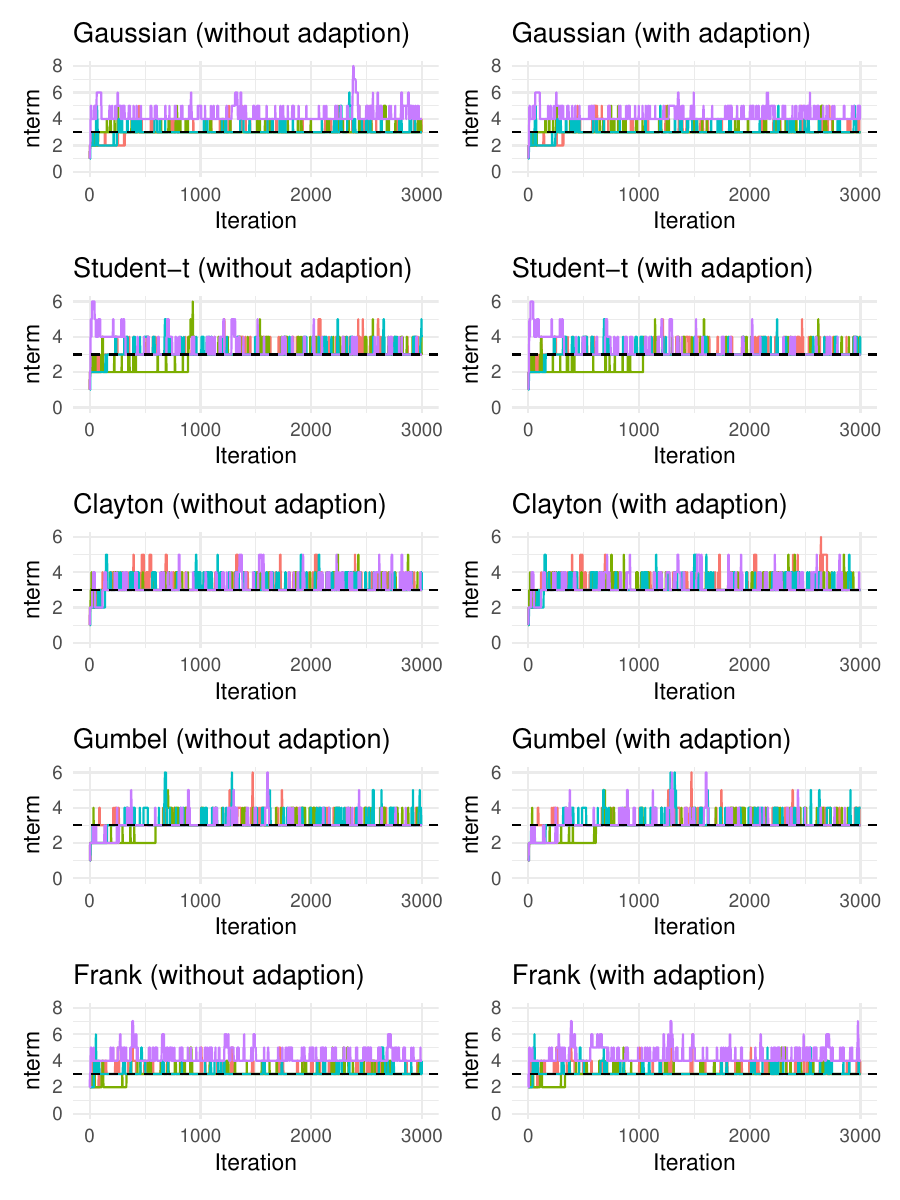"}
		\caption{Trace plots of the number of terminal nodes obtained from our analyses with synthetic datasets generated using tree based Kendall's tau, for the first replicate. The plots are obtained by running 4 parallel chains, each one with 3000 MCMC iterations. The left column shows analyses with C-BART and the right column shows analyses with A-C-BART. The horizontal dashed black line represents the number of terminal nodes (3) of the data-generating tree.}
		\label{fig:trace:nterm:ex1}
	\end{figure}
	
	\begin{figure}
		\centering
		\includegraphics[width = 0.8\linewidth]{"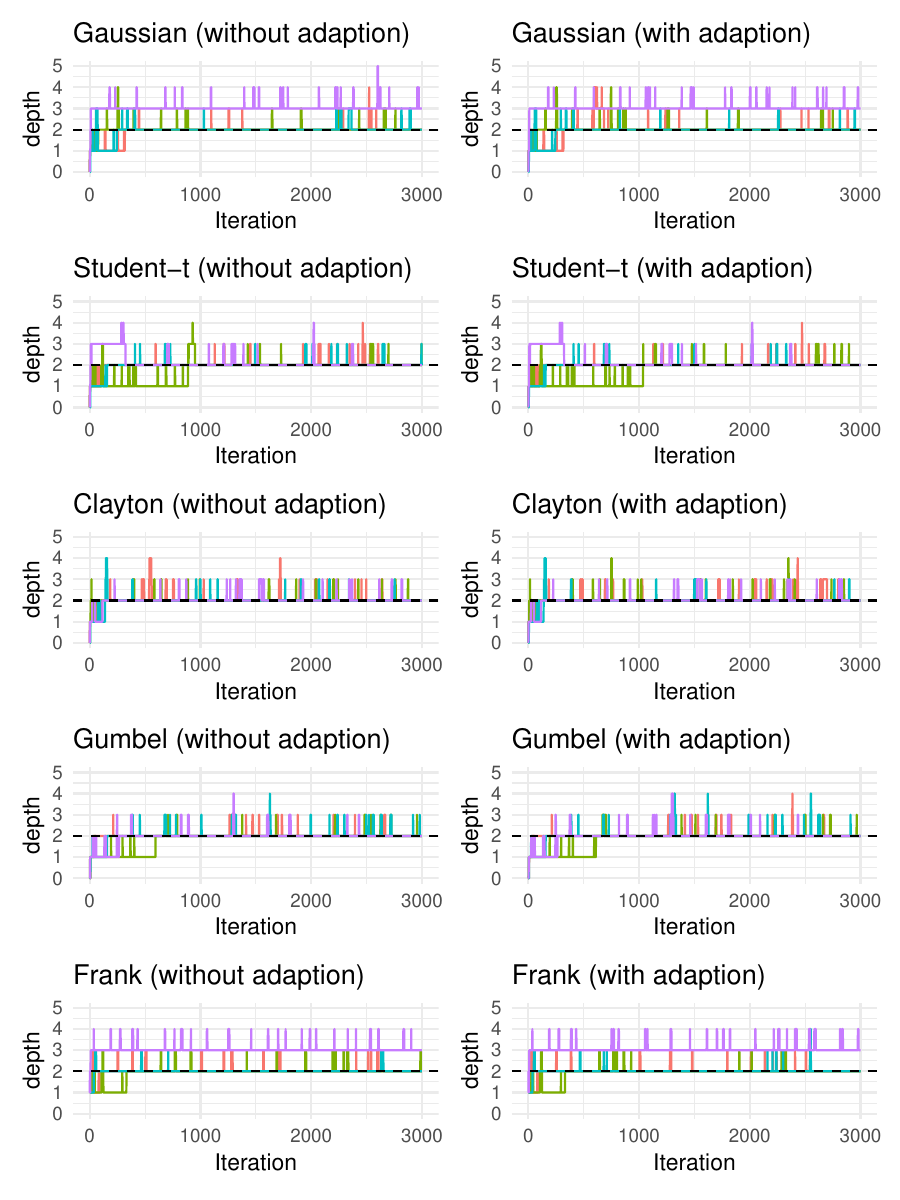"}
		\caption{Trace plots of the depth of the tree obtained from our analyses with synthetic datasets generated using tree based Kendall's tau, for the first replicate. The plots are obtained by running 4 parallel chains, each one with 3000 MCMC iterations. The left column shows analyses with C-BART and the right column shows analyses with A-C-BART. The horizontal dashed black line represents the depth (2) of the data-generating tree.}
		\label{fig:trace:depth:ex1}
	\end{figure}
	
	\begin{figure}
		\centering
		\includegraphics[width = 0.8\linewidth]{"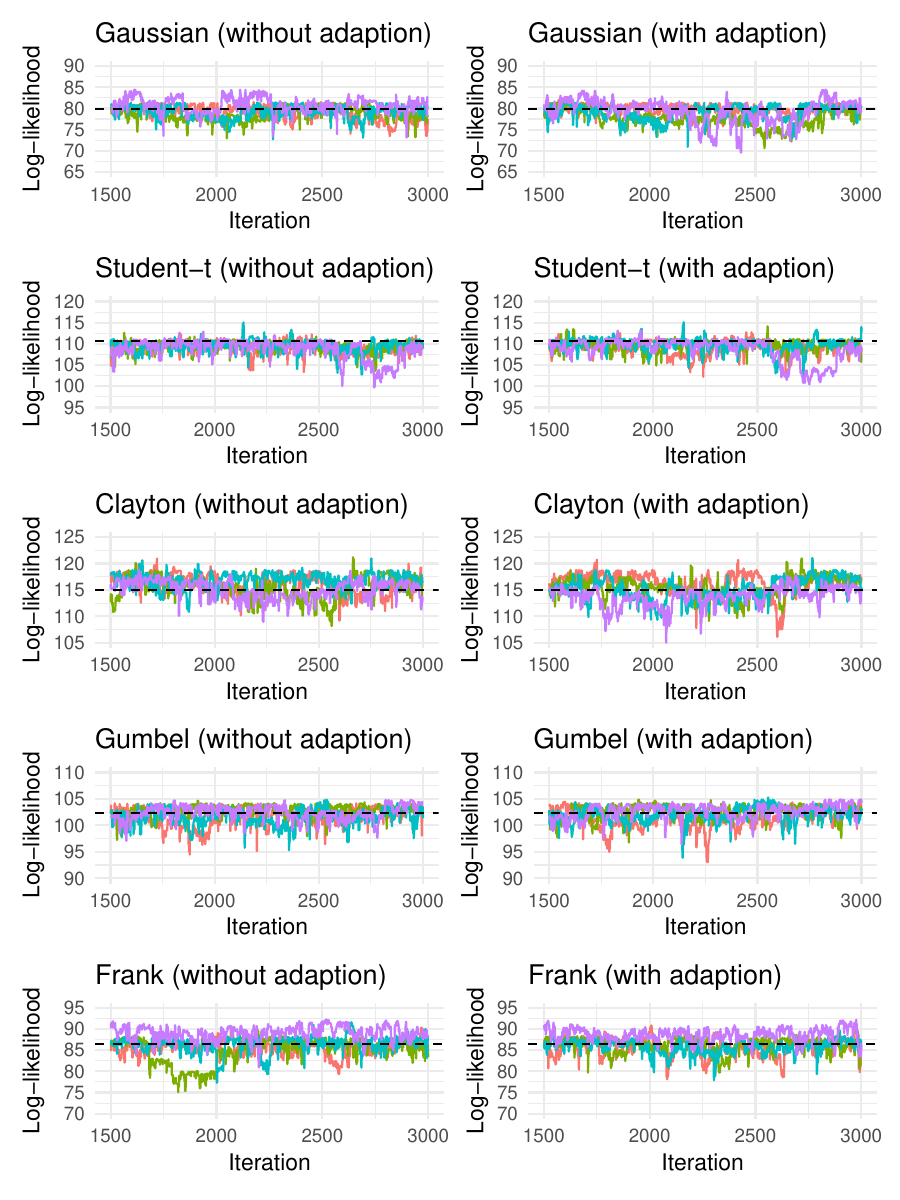"}
		\caption{Trace plots of the log-likelihood obtained from our analyses with synthetic datasets generated using tree based Kendall's tau, for the first replicate. The plots are obtained by running 4 parallel chains, each one with 3000 MCMC iterations. The left column shows analyses with C-BART and the right column shows analyses with A-C-BART. The horizontal dashed black line represents the true log-likelihood of the data-generating tree.}
		\label{fig:trace:like:ex1}
	\end{figure}
	
	We present the traceplots of the number of terminal nodes, depth and likelihood for the Frank copula in \cref{fig:trace:frank} when the proposal variance is 0.2. We notice that for the first dataset, only one out of 4 chains converge to true likelihood region for C-BART after 7500 iterations, whereas all 4 chains converge to the true likelihood region for A-C-BART. We also present the efficiency of our methods in finding the true tree structure in \cref{tab:eff:ex1:frank} and prediction accuracy in \cref{tab:pred:ex1:frank}.
	
	\begin{figure}[H]
		\centering
		\includegraphics[width = 0.85\linewidth]{"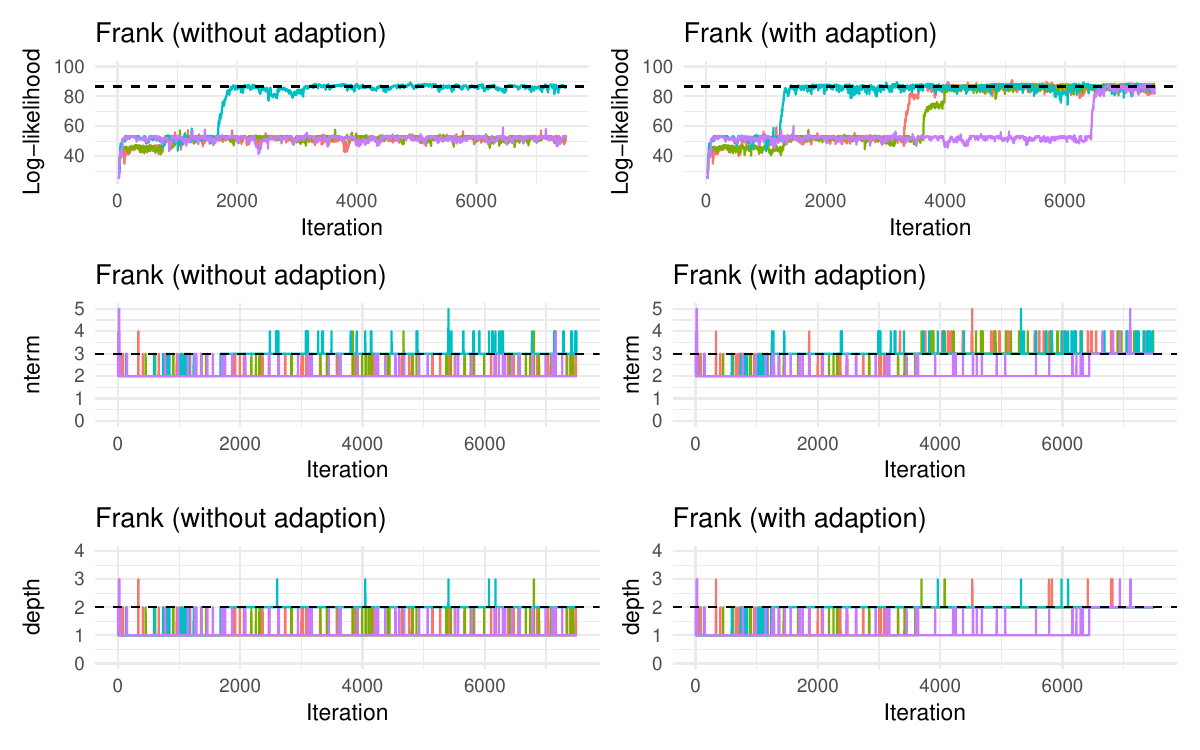"}
		\caption{Trace plots of the likelihood; the number of terminal nodes; and the depth of the tree for the first replicate of the Frank copula datasets generated using tree based Kendall's tau. The plots are obtained by running 4 parallel chains with 7500 MCMC iterations each and proposal variance fixed at $0.2$. The left column shows analyses with C-BART and the right column shows analyses with A-C-BART.}
		\label{fig:trace:frank}
	\end{figure}
	
	\begin{table}[H]
		\centering
		\begin{tabular}{l|cccccc}
			& Mean ($\hat{n}_L$) & SD ($\hat{n}_L$) & Mean ($\hat{D}$) & SD ($\hat{D}$) & Mean Acc. & SD Acc. \\ 
			\midrule
			Frank (C-BART) & 2.242 & 0.277 & 1.232 & 0.265 & 0.111 & 0.029 \\
			Frank (A-C-BART) & 2.592 & 0.343 & 1.569 & 0.329 & 0.117 & 0.024 \\
		\end{tabular}
		\caption{Performance of our proposed method for Frank copula dataset using a tree based conditional Kendall's tau. We present the average and the standard deviation of the posterior expectation of the number of terminal nodes and the depth of the tree along with the acceptance rate. The quantities are obtained by running 4 parallel chains with 7500 MCMC iterations each and proposal variance fixed at $0.2$.}
		\label{tab:eff:ex1:frank}
	\end{table}
	
	\begin{table}[H]
		\centering
		\begin{tabular}{l|ccc|ccc}
			\multicolumn{1}{c|}{} &
			\multicolumn{3}{c|}{C-BART} &
			\multicolumn{3}{c}{A-C-BART} \\
			\midrule
			& RMSE & CI-length & CI-cov & RMSE & CI-length & CI-cov \\ 
			\midrule
			Frank & 0.194 & 0.224 & 0.432 & 0.153 & 0.244 & 0.688 
		\end{tabular}
		\caption{Prediction accuracy of our proposed method for the first replicate of the Frank copula datasets using a tree based conditional Kendall's tau. We split our results in two general columns: left for C-BART, right for A-C-BART. We create subcolumns under each column to present root mean squared error (RMSE); average 95\% credible interval length (CI-length); and 95\% credible interval coverage (CI-cov). The quantities are obtained by running 4 parallel chains with 7500 MCMC iterations each and proposal variance fixed at $0.2$.}
		\label{tab:pred:ex1:frank}
	\end{table}
	
	\subsection{Dataset with a general function}
	
	We present the prediction plots corresponding to the first replicate of the copula datasets in \cref{fig:trace:pred:ex2}. In the figure, we represent the true values of Kendall's conditional tau with red points; posterior estimates of Kendall's tau with a black line and credible intervals with green lines. 
	
	Additionally, we present the likelihood traceplots in \cref{fig:trace:like:ex2}. We notice that both C-BART and A-C-BART are able to move towards the true likelihood region, suggesting the efficiency of our method in exploring the parameter space.
	
	\begin{figure}
		\centering
		\includegraphics[width = 0.8\linewidth]{"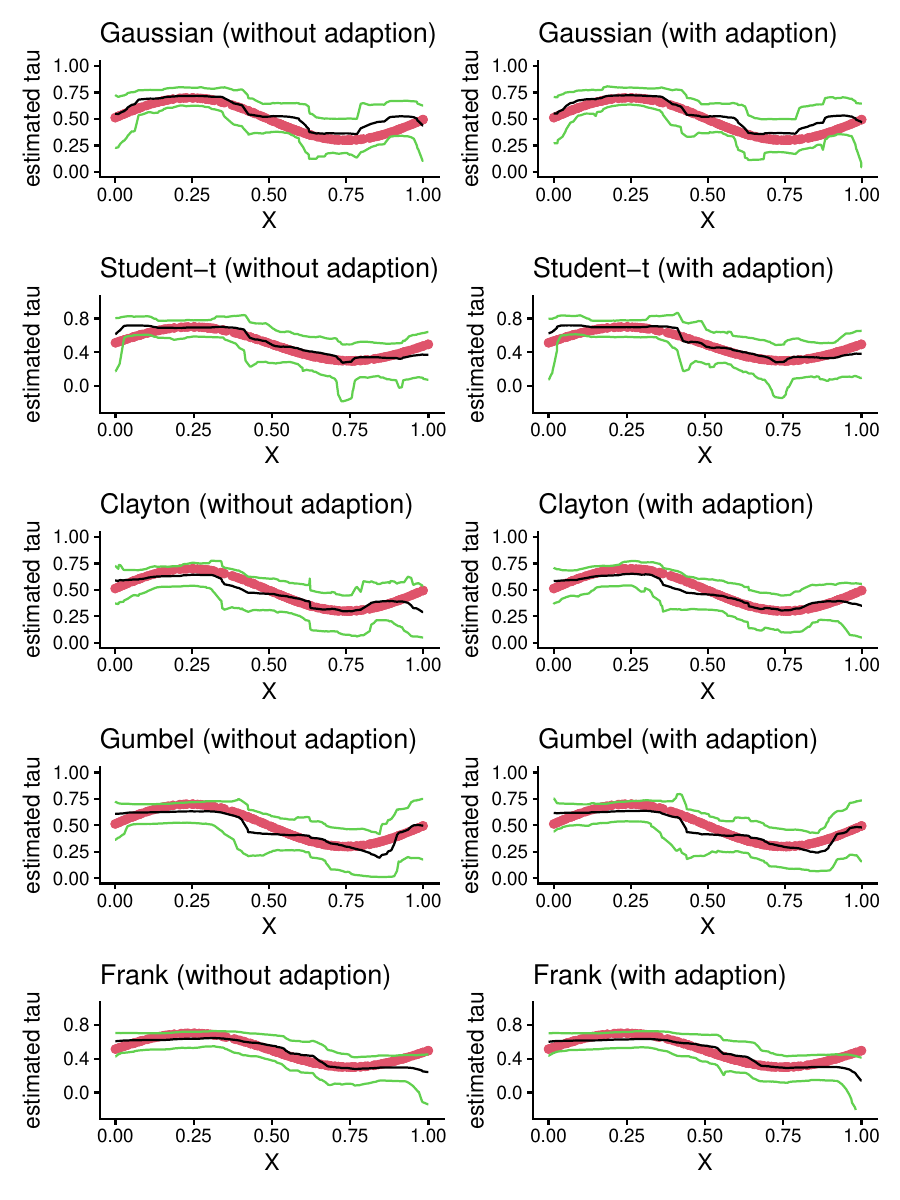"}
		\caption{Comparison of BART fit obtained from our analyses with synthetic datasets generated using a non-linear Kendall's tau, for the first replicate. The plots are obtained by running 4 parallel chains, each one with 3000 MCMC iterations. The left column shows analyses with C-BART and the right column shows analyses with A-C-BART. The green lines denote the 95\% credible interval; the black line denotes the posterior estimate of Kendall's tau; and the red points show the true values of conditional Kendall's tau.}
		\label{fig:trace:pred:ex2}
	\end{figure}
	
	\begin{figure}
		\centering
		\includegraphics[width = 0.8\linewidth]{"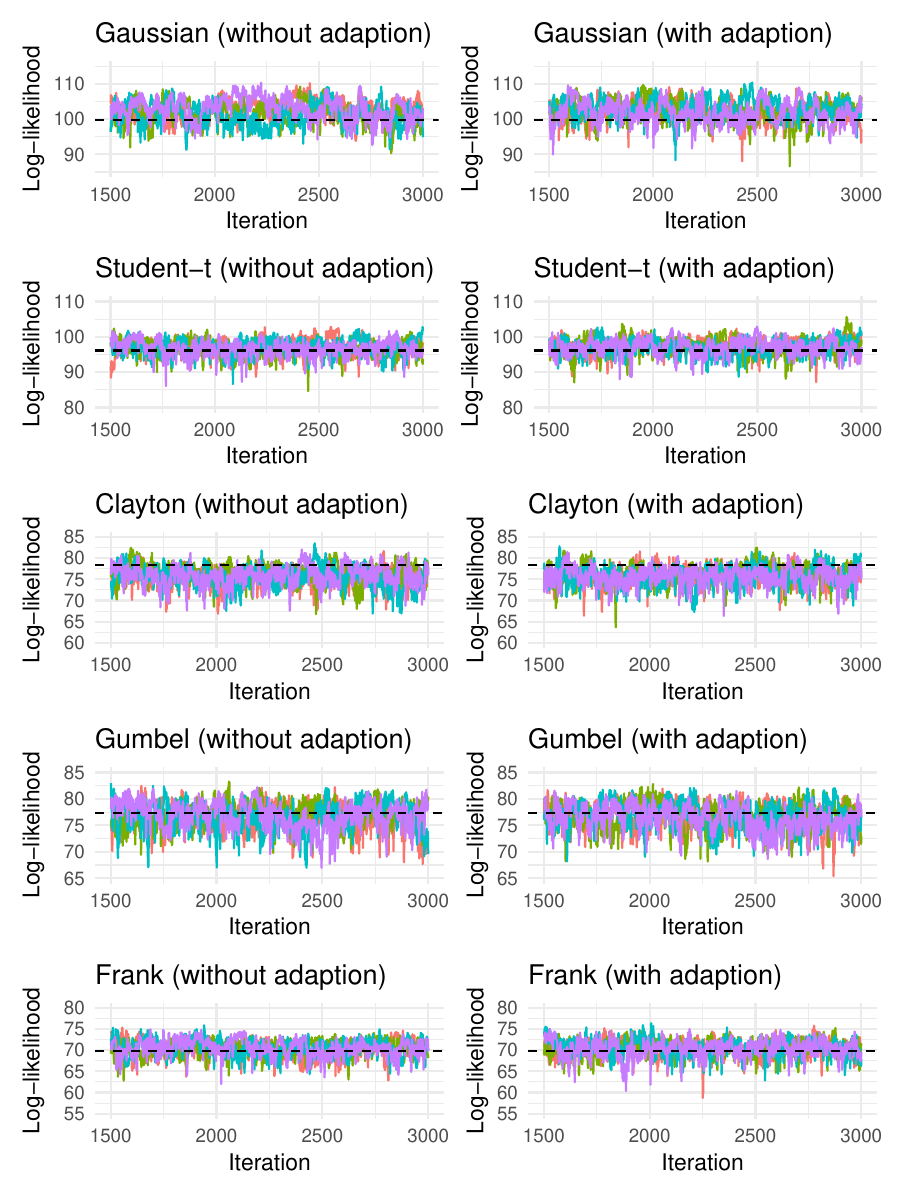"}
		\caption{Trace plots of the log-likelihood obtained from our analyses with synthetic datasets generated using a non-linear Kendall's tau, for the first replicate. The plots are obtained by running 4 parallel chains, each one with 3000 MCMC iterations. The left column shows analyses with C-BART and the right column shows analyses with A-C-BART. The horizontal dashed black line represents the true log-likelihood of the data-generating tree.}
		\label{fig:trace:like:ex2}
	\end{figure}
	
	\section{CIA world fact data}\label{app:cia}
	Additional figures for simulated copulas
	\begin{figure}
		\centering
		\includegraphics[width = 0.75\linewidth]{"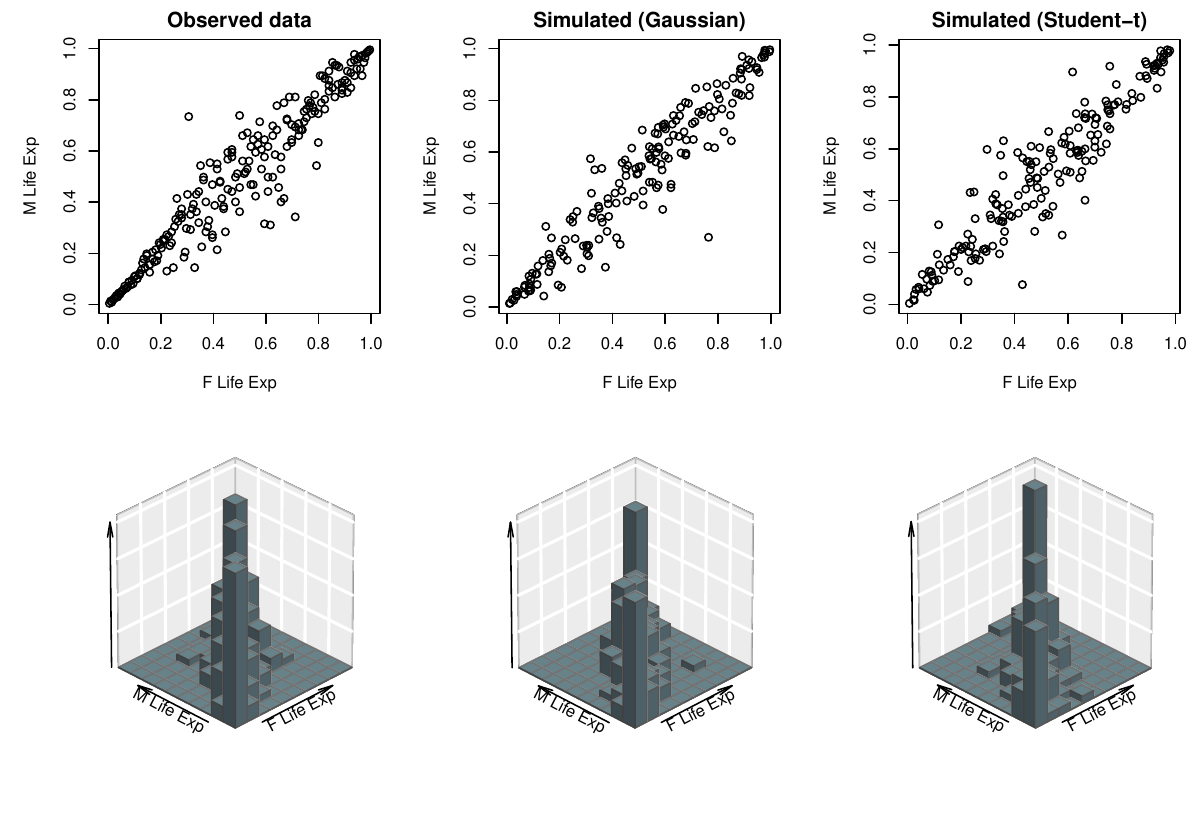"}
		\caption{Scatter plots of pseudo observations and simulated values from the fitted copulas obtained for life expectancy of male and female populations using C-BART with 5 trees.}
		\label{fig:pseudo:LE:woa}
	\end{figure}
	\begin{figure}
		\centering
		\includegraphics[width = 0.75\linewidth]{"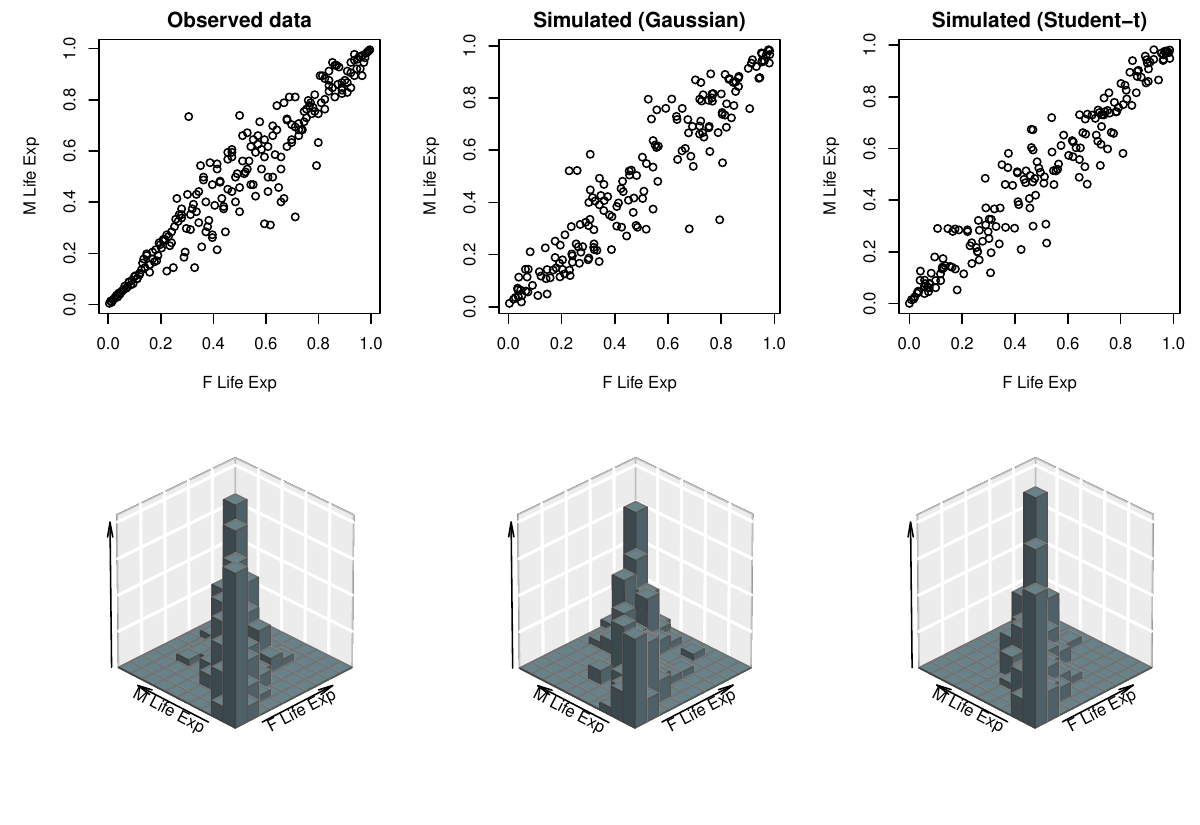"}
		\caption{Scatter plots of pseudo observations and simulated values from the fitted copulas obtained for life expectancy of male and female populations using A-C-BART with 5 trees.}
		\label{fig:pseudo:LE:wa}
	\end{figure}
	
	\begin{figure}
		\centering
		\includegraphics[width = 0.75\linewidth]{"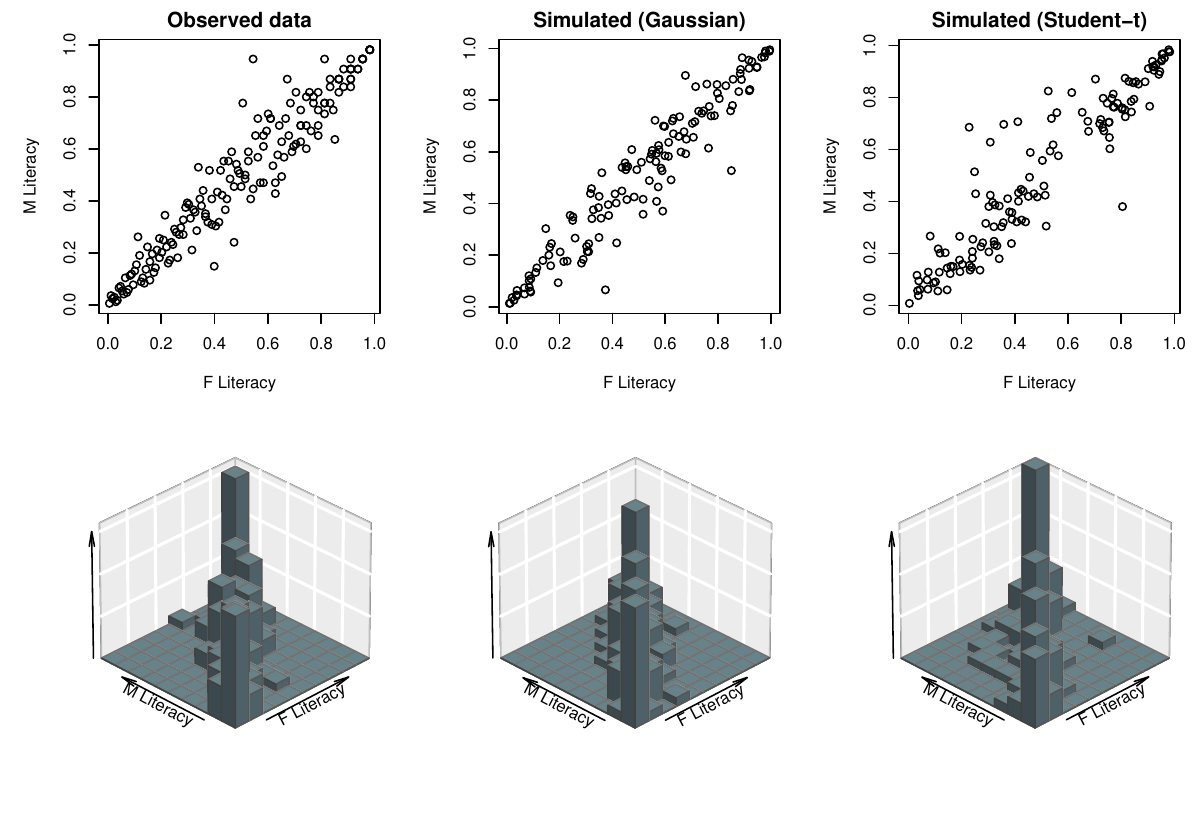"}
		\caption{Scatter plots of pseudo observations and simulated values from the fitted copulas obtained for literacy of male and female population using C-BART with 5 trees.}
		\label{fig:pseudo:LT:woa}
	\end{figure}
	\begin{figure}
		\centering
		\includegraphics[width = 0.75\linewidth]{"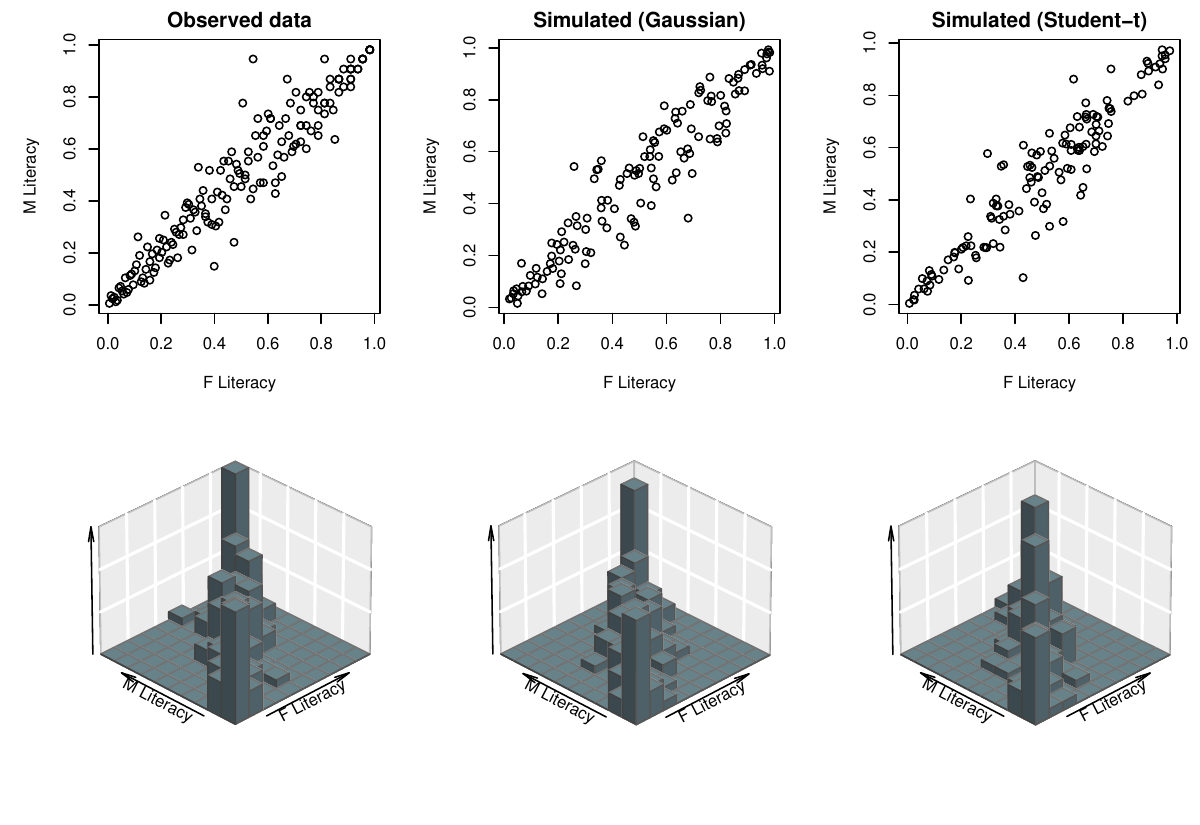"}
		\caption{Scatter plots of pseudo observations and simulated values from the fitted copulas obtained for literacy of male and female population using A-C-BART with 5 trees.}
		\label{fig:pseudo:LT:wa}
	\end{figure}

	Additional figures related to our analyses with 10 trees.
	\begin{figure}[H]
		\centering
		\includegraphics[width = 0.75\linewidth]{"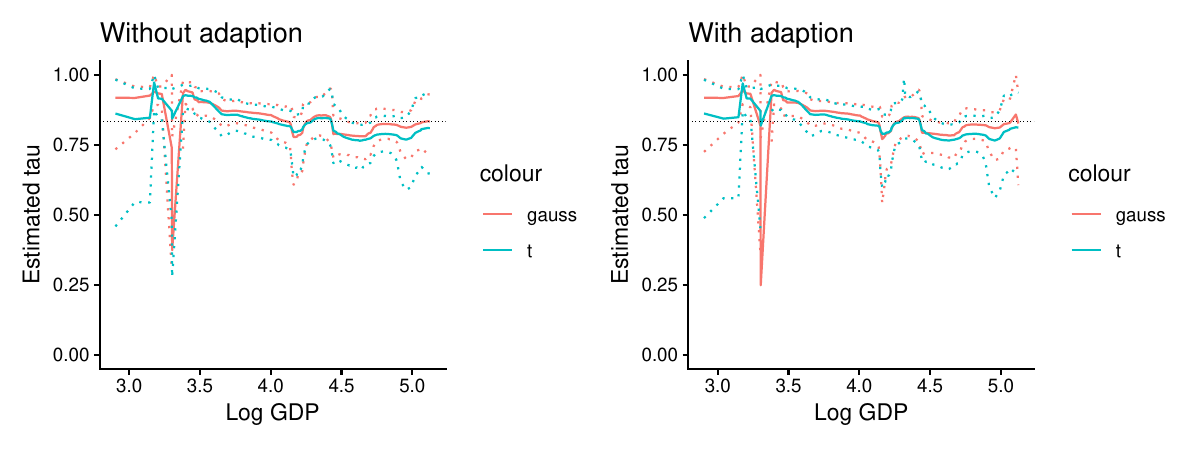"}
		\caption{Estimated dependence between male life expectancy and female life expectancy conditional on log-GDP. For modelling, we use 10 trees and run 4 parallel chains of 50000 iterations. For posterior inference, we discard the first 5000 samples.}
		\label{fig:taus:LE10}
	\end{figure}
	
	\begin{figure}[H]
		\centering
		\includegraphics[width = 0.75\linewidth]{"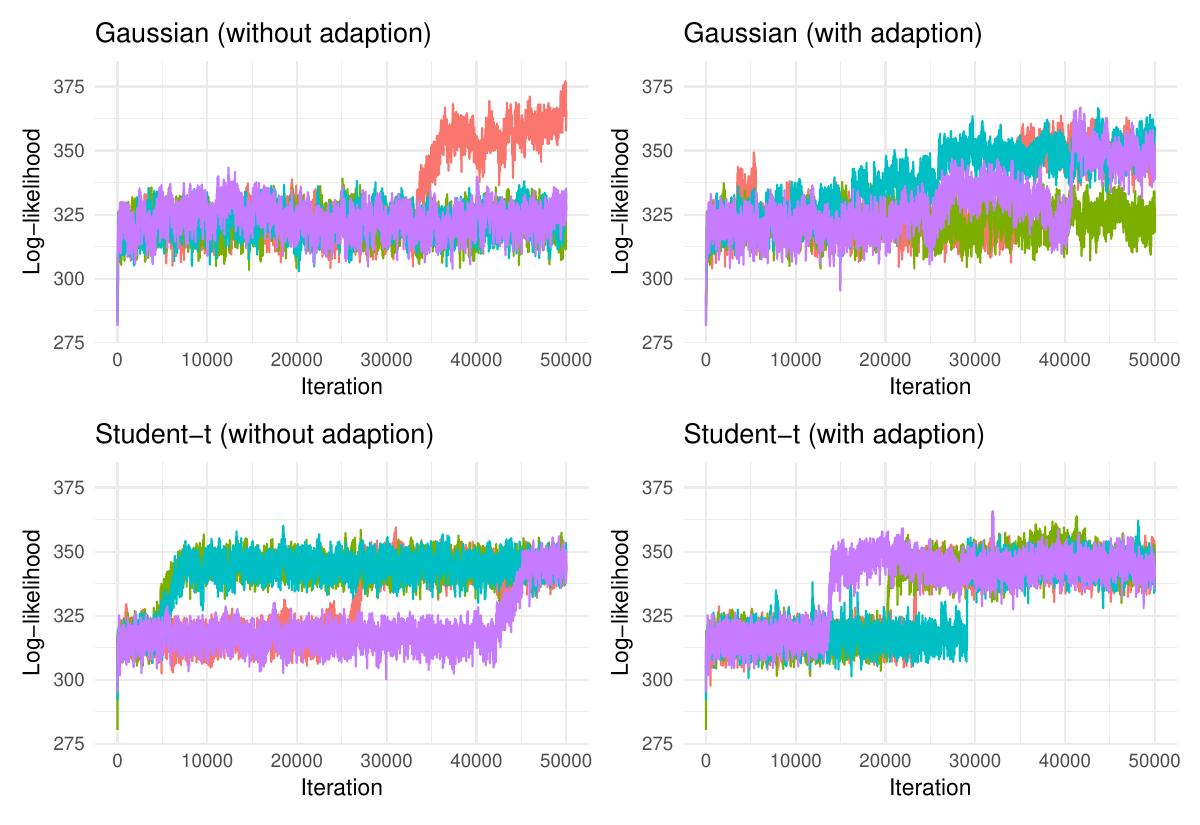"}
		\caption{Trace plots of the log-likelihood obtained from our analyses with life expectancies of the male and female populations. The plots are obtained by running 4 parallel chains, each with 50000 MCMC iterations and 10 trees. The left columns shows analyses with C-BART and the right column shows analyses with A-C-BART.}
		\label{fig:trace:like:real:LE10}
	\end{figure}
	
	\begin{figure}[H]
		\centering
		\includegraphics[width = 0.75\linewidth]{"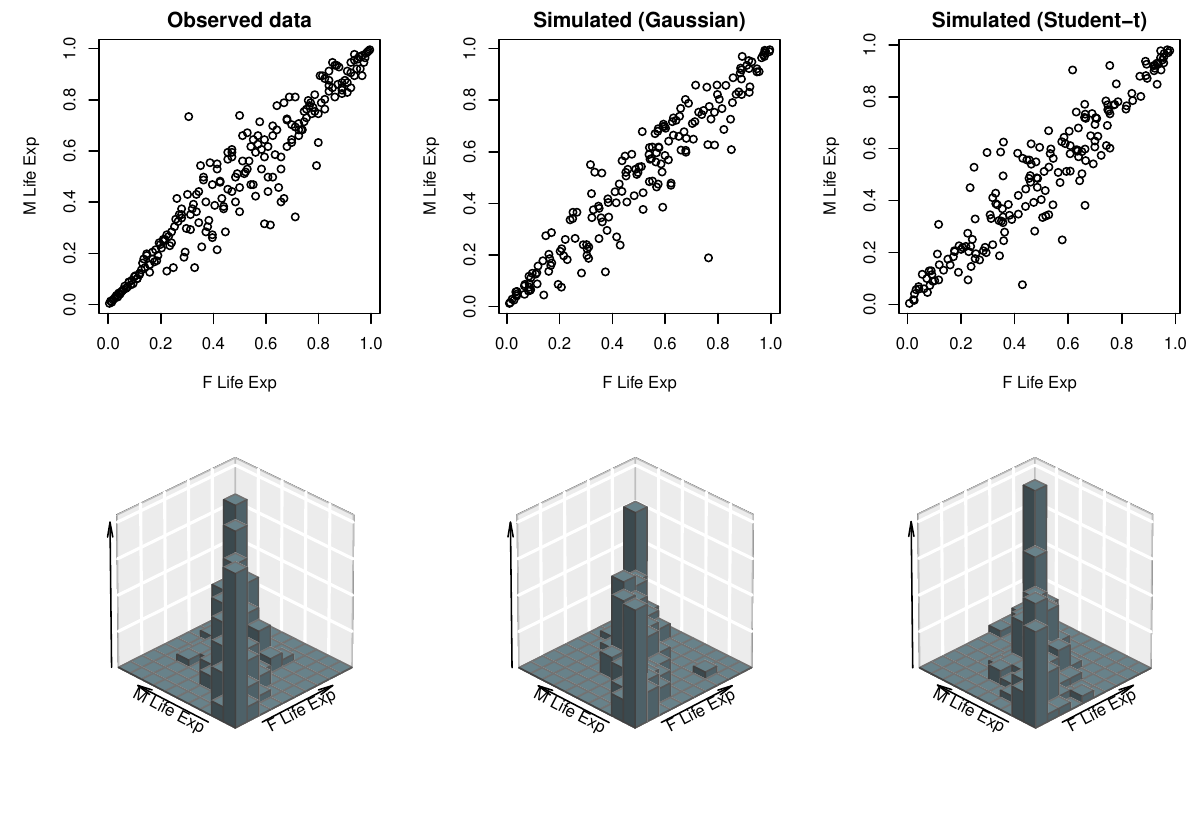"}
		\caption{Scatter plots of pseudo observations and simulated values from the fitted copulas obtained for life expectancy of male and female population using C-BART with 10 trees.}
		\label{fig:pseudo:LE:woa10}
	\end{figure}
	
	\begin{figure}[H]
		\centering
		\includegraphics[width = 0.75\linewidth]{"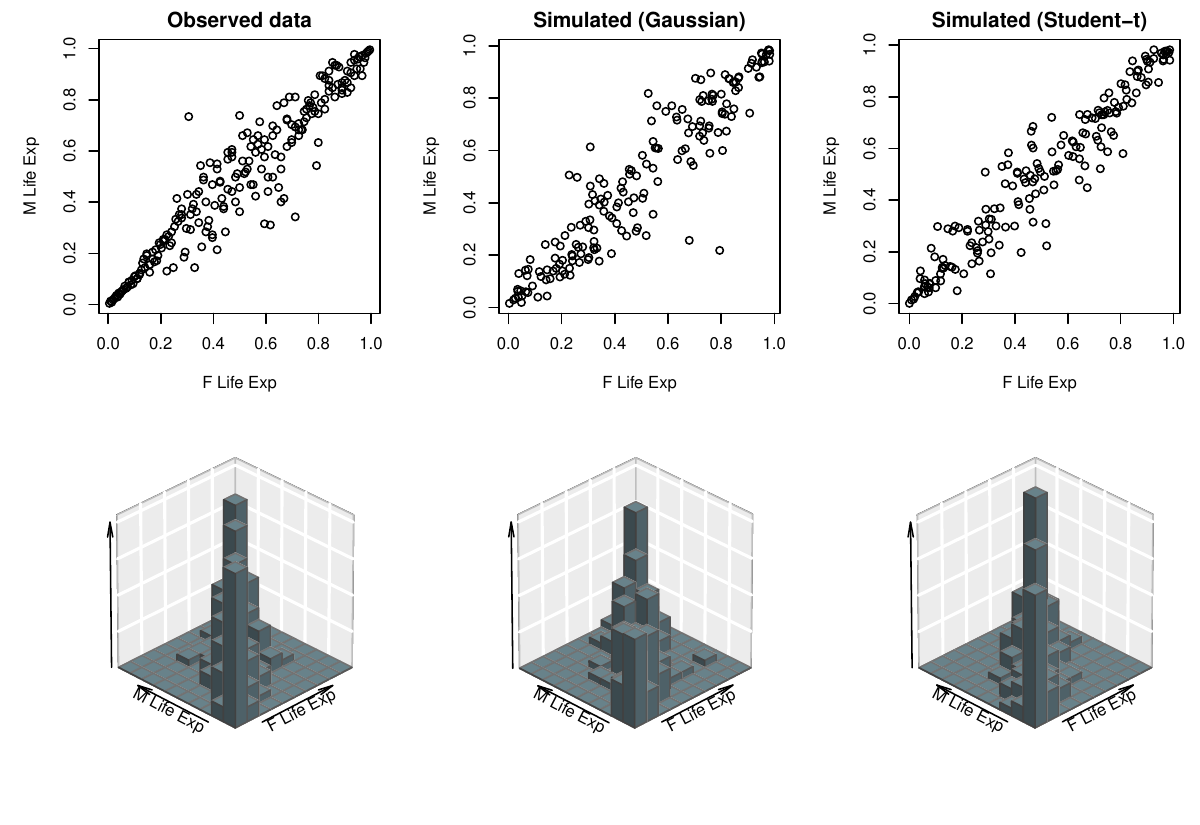"}
		\caption{Scatter plots of pseudo observations and simulated values from the fitted copulas obtained for life expectancy of male and female population using A-C-BART with 10 trees.}
		\label{fig:pseudo:LE:wa10}
	\end{figure}
	
	\begin{figure}[H]
		\centering
		\includegraphics[width = 0.75\linewidth]{"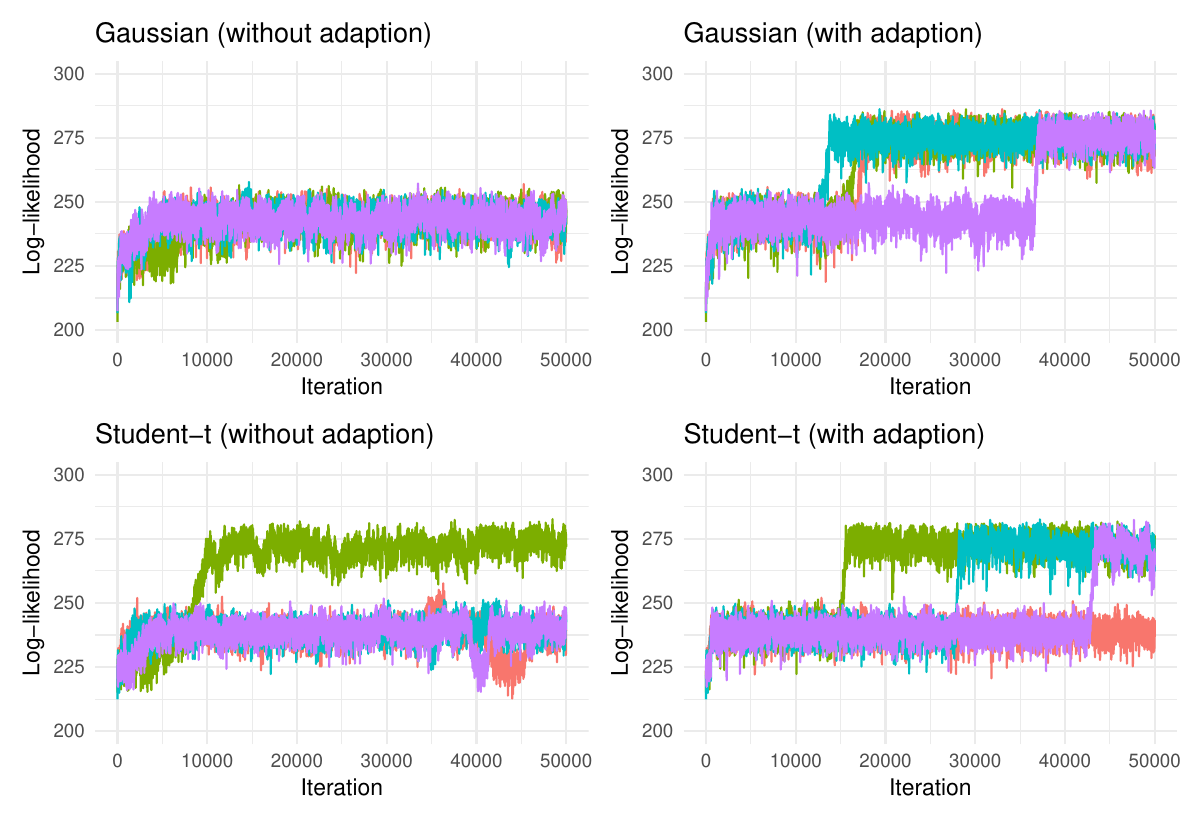"}
		\caption{Trace plots of the log-likelihood obtained from our analyses with literacy rates of the male and female populations. The plots are obtained by running 4 parallel chains, each with 50000 MCMC iterations and 10 trees. The left columns shows analyses with C-BART and the right column shows analyses with A-C-BART.}
		\label{fig:trace:like:real:LT10}
	\end{figure}
	
	\begin{figure}[H]
		\centering
		\includegraphics[width = 0.75\linewidth]{"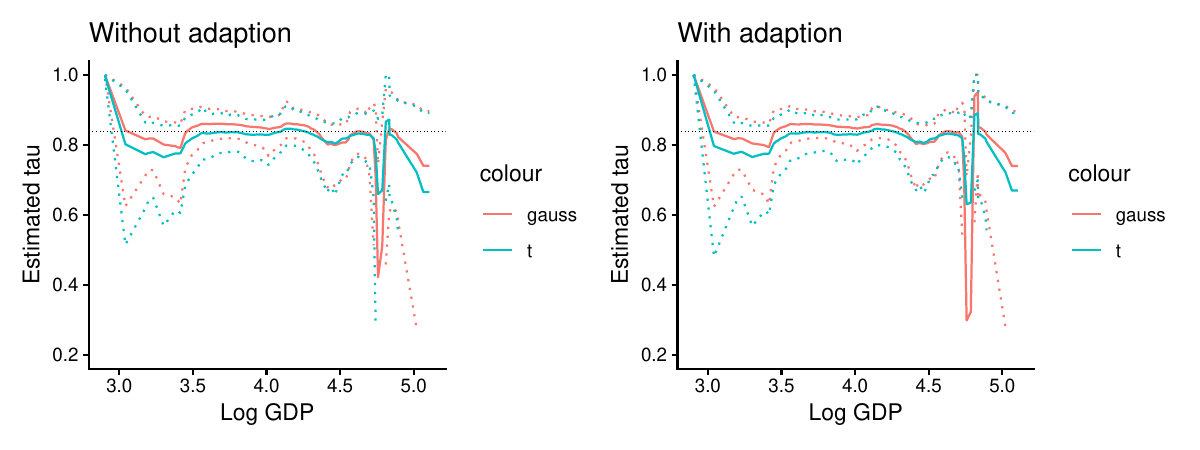"}
		\caption{Estimated dependence between male literacy and female literacy, conditional on log-GDP. For modelling, we use 10 trees and run 4 parallel chains of 50000 iterations. For posterior inference, we discard the first 5000 samples.}
		\label{fig:taus:LT10}
	\end{figure}
	\begin{figure}[H]
		\centering
		\includegraphics[width = 0.75\linewidth]{"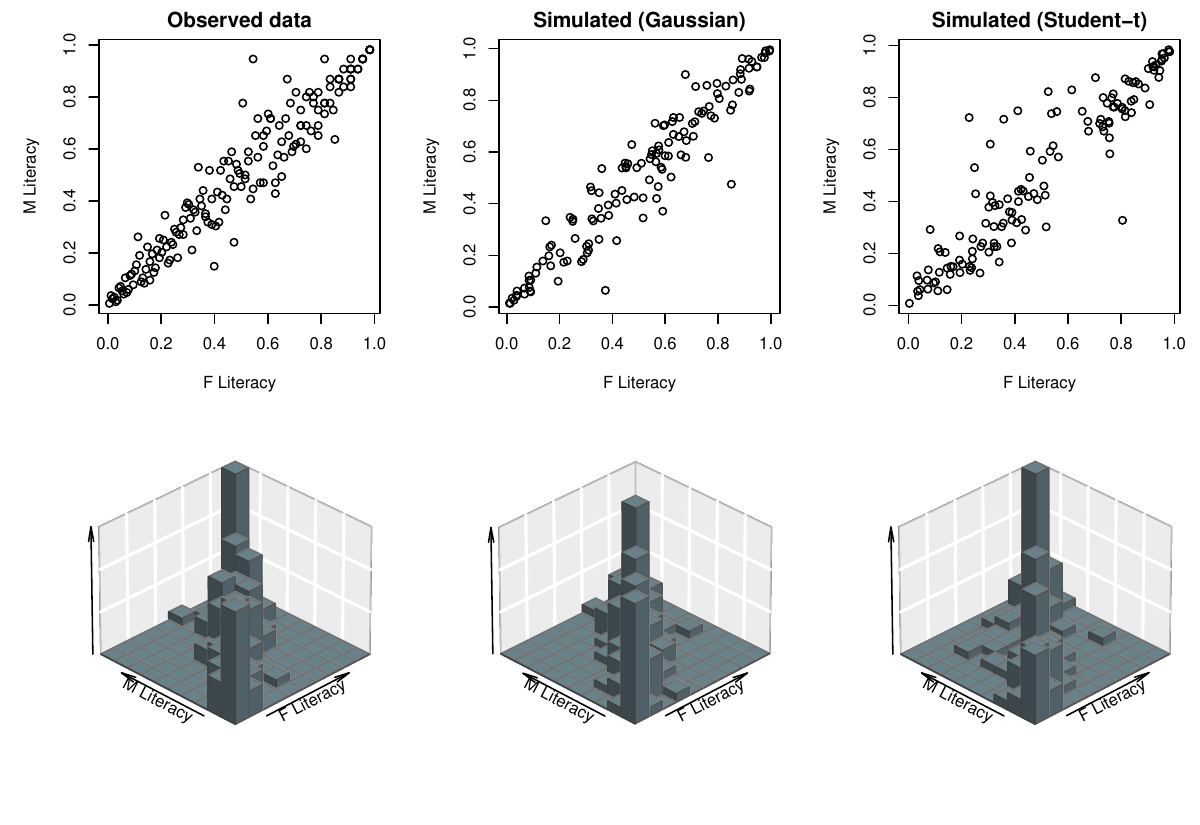"}
		\caption{Scatter plots of pseudo observations and simulated values from the fitted copulas obtained for literacy of male and female population using C-BART with 10 trees.}
		\label{fig:pseudo:LT:woa10}
	\end{figure}
	
	\begin{figure}[H]
		\centering
		\includegraphics[width = 0.75\linewidth]{"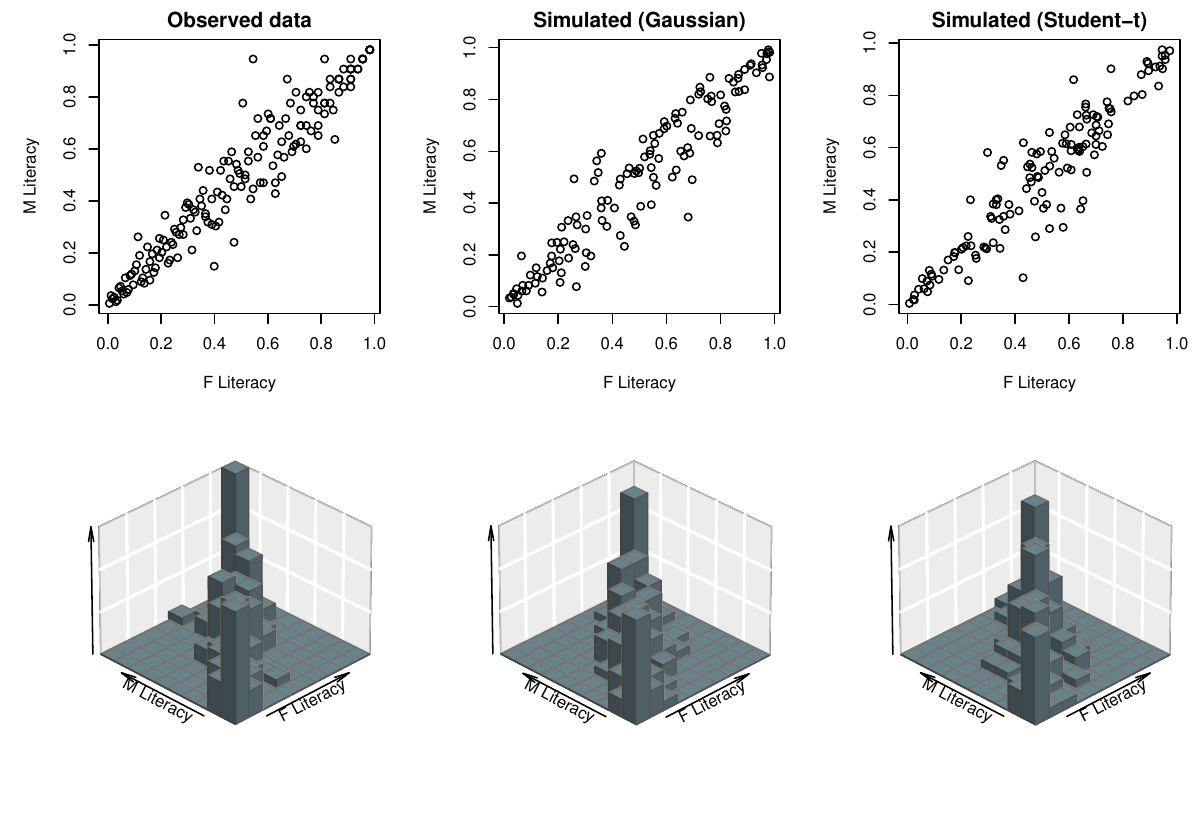"}
		\caption{Scatter plots of pseudo observations and simulated values from the fitted copulas obtained for literacy of male and female population using A-C-BART with 10 trees.}
		\label{fig:pseudo:LT:wa10}
	\end{figure}

\end{document}